\documentclass{lmcs}
\usepackage{amssymb}
\usepackage{algorithm}
\usepackage{algpseudocode}
\usepackage{booktabs}
\setlist[enumerate]{leftmargin=25pt}
\usepackage{xypic}

\newtheorem{exam}[thm]{Example}
\newtheorem{remark}[thm]{Remark}

\newenvironment{theorem}[1]{\begin{thm} \rm \label{thm:#1} }{\end{thm}}
\newenvironment{definition}[1]{\begin{defi} \rm \label{df:#1} }{\end{defi}}
\newenvironment{lemma}[1]{\begin{lem} \rm \label{lem:#1} }{\end{lem}}
\newenvironment{corollary}[1]{\begin{cor} \rm \label{cor:#1} }{\end{cor}}
\newenvironment{proposition}[1]{\begin{prop} \rm \label{pr:#1} }{\end{prop}}

\newcommand{\df}[1]{Definition~\ref{df:#1}}

\newcommand{\Sec}[1]{Section~\ref{sec:#1}}
\newcommand{\plat}[1]{\raisebox{0pt}[0pt][0pt]{#1}}     
\newcommand{\lock}{{\tt lock}}
\newcommand{\unlock}{{\tt unlock}}
\newcommand{\MUTEX}{{\sc mutex}}
\DeclareMathAlphabet{\mathbbm}{U}{bbm}{m}{n}            
\DeclareSymbolFont{frenchscript}{OMS}{ztmcm}{m}{n}
\DeclareMathSymbol{\Pow}{\mathord}{frenchscript}{80}    

\begin{document}

\title[]
    {On the Notions of Bounded Bypass, \linebreak and How to Make any Deadlock-Free MUTEX Protocol Satisfy One of Them}
\author[]{Rob van Glabbeek}[a]
\address{School of Informatics, University of Edinburgh, UK\newline
School of Computer Science and Engineering, University of New South Wales, Sydney, Australia}
\thanks{Supported by Royal Society Wolfson Fellowship RSWF\textbackslash R1\textbackslash 221008}
\email{rvg@cs.stanford.edu}
\author[]{Daniele Gorla}[b]
\address{Dept.\ of Computer Science, Sapienza University of Rome, IT}
\email{gorla@di.uniroma1.it}
\author[]{Myrthe Spronck}[c]
\address{Dept.\ of Mathematics and Computer Science, Eindhoven University of Technology, The Netherlands}
\email{m.s.c.spronck@tue.nl}
\thanks{We thank Bas Luttik for suggestions on the text and Tim Willemse for improvements to the modal $\mu$-calculus formulas.}

\keywords{Mutual exclusion, atomic registers, safe registers, liveness properties, overlapping reads and writes}

\begin{abstract}
In the literature on mutual exclusion, bounded bypass has been used for a long time as a strengthening of starvation-freedom, but, to the best of our knowledge, it still lacks a satisfying definition as a liveness property on its own. Moreover, we have encountered {\MUTEX} protocols for which this notion needs to be slightly weakened in order to be met.

To solve these issues, we first provide a formal definition of bounded bypass (that also corrects a previous definition from Raynal) and then introduce the notions of {\em post-doorway} and {\em intermittent} bounded bypass, two liveness properties that lie between starvation-freedom and bounded bypass. Essentially, intermittent bounded bypass weakens bounded bypass by ignoring the possible bypasses that may happen during the execution of a certain finite set of write operations to shared registers. Orthogonally, post-doorway bounded bypass ignores the bypasses that may happen during a finite initial phase of the \lock\ protocol.

Furthermore, we study an algorithm proposed by Yoah Bar-David in 1998 to enhance the liveness properties of any deadlock-free {\MUTEX} protocol and prove that: (1) in the setting of atomic registers, this algorithm upgrades any deadlock-free mutual exclusion protocol to a bounded bypass one, with a bound that is quadratic in the number of processes; and (2) in the setting of safe and regular registers, the very same algorithm ensures the intermittent version of bounded bypass, still with a quadratic (but slightly different) bound.

Finally, we provide logical formulae for the different notions of bounded bypass defined in this paper and use them to confirm all claims made here, by using model checking. This had a positive impact on the theoretical development of the work, since it  allowed us to identify and correct small mistakes/ambiguities in definitions and proofs.

\end{abstract}

{\hfuzz 4pt
  \maketitle
}


\section{The {\MUTEX} Problem}

The MUTual EXclusion problem ({\MUTEX}, for short) \cite{Dijk65} is one of the most fundamental problems in concurrent and distributed systems. So, it is not surprising that many protocols have been devised in the last 60 years to solve it \cite{Ala03,And93,Ara11,BL93,BDH16,Kes82,Knu66,LamportMutex2,Pet81,SWPL11,Szy88,Szy90} (just to cite a few).
Its general formulation assumes a set of $n$ ($\geq$ 1) processes $p_0, p_1, \ldots p_{n-1}$ running in parallel, each sequentially executing its code. 
Processes are {\em reliable} and {\em asynchronous}. ‘‘Reliable'' means that each process correctly executes the code of the corresponding algorithm. ‘‘Asynchronous'' means that there is no assumption on the time it takes for a process to proceed from a state to the next one (which means that an asynchronous process proceeds at an arbitrary speed).

Each process has its own local memory (whose locations are denoted with lower case letters in what follows), but they can all access some shared memory to interact (whose locations are denoted with upper case letters). 
The memory units that a process can access are {\em registers}; 
the {\em domain} of a register is the set of all the values it can store (e.g., if the register is a sequence of $k$ bits, then its domain is the set of all $2^k$ binary strings of length $k$).
Every process has portions of code (called  {\em critical sections}) that must be run in isolation to avoid inconsistencies; these usually involve writing into the shared memory.

The {\MUTEX} problem is then formulated as follows:

\begin{definition}{}
\label{def:mutex}
Devise a protocol that ensures that at most one process at a time can be in a critical section.
\end{definition}

\noindent
This is usually accomplished by defining two functions: \lock, to be run to get access to a critical section, and \unlock, to be run when leaving the critical section. 

The main correctness property of the {\MUTEX} problem formulated above is a safety property; it is well-known that, to be meaningful, every safety property must be joint with a liveness property \cite{AS85}. 
Two fundamental liveness properties for {\MUTEX} protocols are the following:

\begin{definition}{Deadlock-freedom}
A {\MUTEX} protocol is {\em deadlock-free} if, whenever at some time $t$ at least one process is running its \lock, then afterwards a process will enter its critical section. (This needs not be one of the processes executing their \lock\ at time $t$.)
\end{definition}

\begin{definition}{Starvation-freedom}
A {\MUTEX} protocol is {\em starvation-free} whenever every invocation of a \lock\  eventually terminates (and lets the process that invoked the \lock\ enter its critical section).
\end{definition}

\noindent
A third property that has been used in the literature for a long time is {\em bounded bypass}. A discussion of this notion and its formalization will be presented in Section~\ref{sec:bypass}.

\section{Assumptions on the Hardware}

In order to reason on the correctness of {\MUTEX} protocols, it is vital to make explicit any assumption on the hardware on which the shared memory is operating. A central assumption made by Dijkstra in the first published paper on mutual exclusion \cite{Dijk65} is the following:
\begin{quote}
\emph{Writing a word into or nondestructively reading a word from this store are undividable operations; i.e., when two or more computers try to communicate (either for reading or for writing) simultaneously with the same common location, these communications will take place one after the other, but in an unknown order.}
\end{quote}
This assumption is sometimes called \emph{atomicity}.
When we acknowledge (as Dijkstra did) that read and write actions take time, atomicity implies that, if process $i$ tries to read or write to a given register while another process $j$ is already busy reading or writing to that register, process $i$ has to wait until process $j$ has completed its action. Thus a read or write action of one process may be postponed by a read or write action of another; a situation that is sometimes referred to as a \emph{blocking} \cite{CDV09}.

In most works on mutual exclusion, including \cite{Raynal,T} and this paper,
it is assumed instead that reads and writes to the shared memory are {\em non-blocking}\/; if a process tries to read or write, this happens immediately. When acknowledging that reads and writes take time, it follows that different actions on the same register may overlap in time. In this setting, there exits a hierarchy of memory models, based on the values which may be returned by read actions: we say that model M1 is \emph{at least as strong} as M2 if, for any given read, the set of values that could be returned according to M1 is contained in the one according to M2.

Two other common assumptions on the underlying hardware are that only read and write operations are available (i.e., test-and-set, compare-and-swap, or similar are not considered), and that they always terminate.
According to whether a register can be written/read by just one process or by many different ones, we have: single-write/single-read (SWSR), single-write/multiple-read (SWMR), multiple-write/single-read (MWSR), or multiple-write/multiple-read (MWMR) registers.
Lamport \cite{Lam86d} proposed three memory models for SWMR registers, which form a strict hierarchy; from the weakest to the strongest, these are the \emph{safe registers}, the \emph{regular registers} and the \emph{atomic registers}. In all these models, a read of a register $R$ must always return a value from the domain of $R$. Whereas there are many different ways to generalize the concept of SWMR regular register to MWMR registers \cite{SWPL11,spronck2023process}, there are canonical definitions of safe and atomic MWMR registers, which appear, e.g., in \cite{Raynal}.

\begin{definition}{Safe Register}
  A register $R$ is \emph{safe} if it guarantees the following:
  \begin{itemize}
  \item Initially, and whenever a write on $R$ terminates when no other write on $R$ is going on, $R$ contains a well-defined value---the value \emph{of} $R$---that does not change until a successive write commences.
  \item When a write to $R$ of value $v$ terminates, and this write had no overlap with any other write on $R$, then $v$ becomes the value of $R$.
  \item A read that does not overlap with any write returns the value of $R$.
  \end{itemize}
\end{definition}

\noindent
When we merely know that a register is safe, nothing more is guaranteed than the above. Thus a read that overlaps with a write may return any value in the domain of the register, and when the last of a set of partially overlapping writes terminates, the register may hold any value in its domain. 

We consider regular registers only in situations where overlapping writes are excluded; this is enough since the algorithm we are going to consider in Section~\ref{sec:alg} has SWMR registers and one MWMR register that is written only in mutual exclusion mode (to be precise, at the beginning of the \unlock).
In that case, a safe register $R$ is {\em regular} if it also satisfies the following requirement:
  \begin{itemize}
  \item A read that overlaps with a set of writes returns either any of the values written by these writes or the value contained in the register before the first of these writes started.
  \end{itemize}

\begin{definition}{Atomic Register}
  A register $R$ is \emph{atomic} if it behaves \emph{as if}, for each read and write on $R$, there is a single point in time (its \emph{execution point}) between its start and termination, when the action takes effect. For a write action of value $v$, this means that $v$ becomes the value of $R$ at the execution point; for a read action, this means that it will return the value that $R$ has at that time.
  Finally, different actions have different execution points, and the value of $R$ changes only at the execution points of write actions.
\end{definition}
\noindent
The above definition does not stipulate that an atomic register actually features such execution points. It merely says that, for the sole purpose of determining which value may be returned by a read action, we may assume it does.

\section{Bounded Bypass}
\label{sec:bypass}

The property of {\em bounded bypass} (sometimes referred to as {\em bounded waiting}) has been used for a long time as a desirable property of {\MUTEX} protocols. It is encountered in two variants: as a safety property\footnote{Or rather, a property that becomes a safety property when requiring a concrete bound.} that needs to be required in conjunction with deadlock- or starvation-freedom, or as a liveness property that is intended to imply deadlock- and starvation-freedom. The former use occurs in \cite[Sect.~5.2]{OSconcepts} (where it is paired with a form of deadlock-freedom called {\em progress}), with the following definition:
\begin{quote}\it
There exists a bound on the number of times
that other processes are allowed to enter their critical sections after a process has made a request to enter its critical section and before that
request is granted.
\end{quote}
It also occurs in \cite[pg.~48]{T}:
\begin{quote}\it
  If a given process is in its entry code, then there is a bound on the number of times any other process is able to enter its critical section before the given process does so.
\end{quote}
The latter use, as a liveness property that is intended to imply deadlock- and starvation-freedom, occurs in \cite[pg.~11]{Raynal}, where bounded bypass is defined as
\begin{quote}\it
There is a function $f(n)$ such that, each time a process invokes \lock, it loses at most $f(n)$ competitions with respect to the other processes.
\end{quote}
with $n$ the number of processes, and competition losing is defined as follows \cite[pg.~9]{Raynal}:
\begin{quote}\it
When several processes are simultaneously executing \lock, we say that they are competing for access to the critical section code. If one of these invocations terminates while the other invocations do not, the corresponding process $p$ is called the winner, while each other competing process $q$ is a loser (its invocation remains pending).
\end{quote}
Raynal claims that deadlock-freedom, starvation-freedom and bounded bypass (as defined above) form a ‘‘hierarchy of liveness properties: bounded bypass $\Rightarrow$ starvation-freedom $\equiv$ finite bypass $\Rightarrow$ deadlock-freedom" \cite[pg.~11]{Raynal}. This is however false, since 
nothing guarantees that whoever invokes \lock\ eventually enters its critical section in a {\MUTEX} protocol that satisfies bounded bypass (as defined above). As an extreme example, consider a protocol in which each {\lock} contains the single instruction ``\textbf{await} \textit{false}''.\footnote{The instruction \textbf{await} \textit{COND} blocks the invoking process until the boolean condition \textit{COND} becomes true; this can be implemented, e.g., by using busy waiting, but our results are not influenced by the actual implementation.} Such a protocol trivially satisfies bounded bypass as formulated in \cite{Raynal}, since no competition will ever be lost (nor will one ever be won); yet it does not satisfy deadlock- or starvation-freedom. This breaks the chain of implications reported above. Note, however, that the conjunction of deadlock-freedom and Raynal's bounded bypass does imply starvation-freedom.

We now correct the definition of Raynal, with the intention of having one single liveness property that implies starvation-freedom (and deadlock-freedom). To this aim, we say that a process $i$ running \lock\ \emph{loses} a competition with another process $j$ if the \lock\ of $j$ overlaps in time with that of $i$ and terminates earlier.
A first attempt could be the following formulation:

\begin{quote}\hfuzz 4pt
There is a function $f(n)$ such that, each time a process invokes \lock, \underline{it will} \underline{eventually indicate its desire to enter the critical section and, following this,} it enters after losing at most $f(n)$ competitions.
\end{quote}

This proposal fixes two mistakes in the definition provided by \cite{Raynal}. First, with the explicit requirement that the invoking process enters the critical section, the chain of implications reported at the end of \cite[Sect.~1.3.1]{Raynal} now holds. 
Second, the underlined part is missing in \cite{Raynal}, and also in \cite{T}, but must be intended anyway as, without it, bounded bypass is not a useful concept. 
Namely, as long as a process $p_i$ that has invoked its $\lock$ has not in any way communicated to the competing processes its intention to enter the critical section, from the perspective of these other processes $p_i$ could just as well have not invoked its \lock. For this reason, the other processes cannot hold back in favor of $p_i$, and so can bypass $p_i$ as many times as they want.
One might think that this issue can be mitigated by $p_i$ indicating its desire to enter the critical section in the very first instruction of its \lock, for instance by setting a flag that can be read by the other processes. However, even when assuming atomic registers, this first action takes time, so there exists a period during which the other processes are unaware of $p_i$’s desire and can bypass $p_i$ an unbounded number of times.
To avoid this pitfall, in the spirit of the definition of Silberschatz, Galvin and Gagne \cite{OSconcepts}, the underlined part needs to be added to Taubenfeld's and Raynal's definitions, as done in our proposal.

The above formulation is however quite unsatisfactory because it leaves unspecified what it really means for a process to indicate its desire (or to make a request) to enter the critical section. In many protocols in the literature, this is quite uncontroversial: it is accomplished by setting an SWMR variable (typically called {\em flag}) and this usually happens at the very beginning of the \lock. However, there are notable exceptions: the most remarkable one is Dekker's algorithm (reported for the first time as a flow chart in \cite{dekker1} and in pseudocode in \cite{Ala03}), one of the first and most famous {\MUTEX} protocols for two processes. We report it here as Algorithm~\ref{Dekker} to ease references in what follows.

\begin{algorithm}[h]
  \caption{Dekker's Algorithm (as reported in \cite{Ala03} -- it first appeared in \cite{dekker1}). Line 1 is the preliminary initialization; lines 3 and 11 are the way in which $p_i$ invokes \lock/\unlock, for $i=0$ or $1$.}
  \label{Dekker}
  \begin{algorithmic}[1]

   \State{Initialize FLAG[0] and FLAG[1] to 0, and TURN to either 0 or 1}

   \State{\ }

   \State{{\lock}($i$) := }
   \State\hspace{\algorithmicindent}{FLAG[$i$] $\leftarrow$ 1}
   \State\hspace{\algorithmicindent}{\bf while} FLAG[1$-i$] = 1 {\bf do}
   \State\hspace{\algorithmicindent}\hspace{\algorithmicindent}{{\bf if} TURN = 1$- i$ {\bf then}}
   \State\hspace{\algorithmicindent}\hspace{\algorithmicindent}\hspace{\algorithmicindent}{FLAG[$i$] $\leftarrow$ 0}
   \State\hspace{\algorithmicindent}\hspace{\algorithmicindent}\hspace{\algorithmicindent}{{\bf await} TURN = $i$}
   \State\hspace{\algorithmicindent}\hspace{\algorithmicindent}\hspace{\algorithmicindent}{FLAG[$i$] $\leftarrow$ 1}
   
   \State{\ }

   \State{{\unlock}($i$) := }
   \State\hspace{\algorithmicindent}{TURN $\leftarrow$ 1$-i$}
   \State\hspace{\algorithmicindent}{FLAG[$i$] $\leftarrow$ 0}
   
\end{algorithmic}
\end{algorithm}  

Here, the problem is that the flag (that is indeed set at the very beginning of the \lock) may change its value during the execution of the  \lock. This illustrates our dissatisfaction with the vagueness of phrases like ‘‘indicate its desire to enter the critical section'' or ``made a request to enter its
critical section'': which is the real moment in Dekker's algorithm when a process indicates its desire (or made a request) to enter the critical section?

One possible answer is: the moment in which the flag does not change anymore. In this way, Dekker's algorithm would satisfy bounded bypass. However, with this understanding, we could add to protocols an instruction that generates this very last setting just before entering the critical section; in this case, any starvation-free protocol can be turned into one that satisfies bounded bypass with bound 0! 

This leads us to move to the opposite extreme and allow a process to declare its intention with the first instruction of its \lock; we remark that this should be a write to a shared register, otherwise no form of bounded bypass is possible. 
We take the latter approach and provide our  definition of bounded bypass:

\begin{definition}{Bounded bypass}
A {\MUTEX} protocol satisfies {\em bounded bypass} whenever there is a function $f(n)$ such that every process that invokes the \lock\ enters the critical section, and this happens after losing at most $f(n)$ competitions from the moment in which it completes the execution of the first instruction of the \lock.
\end{definition}

Essentially, this definition provides a period of time in which many bypasses may happen (namely, the time a process takes to perform its first instruction); however, after this period, the number of possible bypasses is upper bounded by $f(n)$. Moreover, this aligns with virtually all {\MUTEX} protocols found in the literature, where the desire of entering the critical section is declared in the very first instruction of their \lock.

\begin{remark}
\label{rem:DekkerBB}
With this definition, Dekker's algorithm does {\em not} satisfy bounded bypass. 
To see this, consider the following execution:
\begin{itemize}
\item assume that TURN is initialized to 0;
\item $p_0$ and $p_1$ both simultaneously do lines 4, 5 and 6;
\item $p_0$ finds TURN = 0, so jumps back to line 5;
\item $p_1$ finds TURN = 0, so proceeds to line 7, executes it and then sleeps for a long time;
\item now, from $p_0$'s point of view, $p_1$ is like if it has not locked at all; so, $p_0$ can enter and exit the critical section as many times as it wants (until $p_1$ wakes up).
\end{itemize}
In contrast, it is well known that, when assuming atomic registers, Dekker's algorithm satisfies starvation-freedom (see \cite{GLS25} for an automated proof of this property).
\end{remark}

 \subsection{Post-Doorway Bounded Bypass}\mbox{}
 \label{sec:pdbb}

\bigskip
\noindent
\df{Bounded bypass} proposes a definition of bounded bypass that fixes many issues of other definitions proposed in the literature.
However, we realize that this is not the only possible such definition for the intended property.
We explore here a first possible variant, based on an observation made by Lamport in \cite{LamportMutex2}.
Essentially, he assumes that every \lock\ protocol is divided in two parts, which he calls {\em doorway} and {\em waiting};
the doorway is a portion of code whose execution requires only a bounded number of elementary operation executions (and hence always terminates).
Thus, a plausible alternative to \df{Bounded bypass} consists in counting only the bypasses that happen during the waiting part, that is, after the doorway part has been completed. This leads us to the following notion of bounded bypass, that we call {\em post-doorway}.\footnote{\cite{LamportMutex2} reports something very similar to the following definition, calling it {\em $r$-bounded waiting} (see the last paragraph of \cite[Sect.2.2]{LamportMutex2}) and ascribing it to Rivest and Pratt \cite{RP76}. However, the definition in \cite{RP76} does not mention the doorway at all; moreover, it only poses a linear bound on how many times every process may bypass a given one, whereas the following definition aligns with Raynal and assumes a global bound that can be any function of $n$.}

\begin{definition}{Doorway bounded bypass}
A {\MUTEX} protocol satisfies {\em post-doorway bounded bypass} whenever there is a function $f(n)$ and an initial portion of the \lock\ whose execution requires only a bounded number of elementary actions (the {\em doorway}) such that every process that invokes the \lock\ enters the critical section, and this happens after losing at most $f(n)$ competitions from the moment in which it completes the doorway.
\end{definition}

\noindent
Clearly, \df{Doorway bounded bypass} weakens \df{Bounded bypass}: if we have a protocol that satisfies the latter, then the protocol also satisfies the former, by letting the doorway be the very first instruction. By contrast, the reverse implication in general does not hold: imagine a \lock\ protocol with a doorway made up by at least its first two instructions, run by a process that sleeps for a very long time between any of them; then, there may be no upper bound on the possible bypasses that happen during this period (that are ignored by post-doorway bounded bypass, but counted by bounded bypass). 

To make this reasoning more concrete, and to show that our definition has a relevance for what has already appeared in the literature, consider Anderson's protocol \cite{And93}, whose code is reported in Algorithm~\ref{Anderson}, to ease the following discussion.

\begin{algorithm}[h]
  \caption{Anderson's Algorithm \cite{And93}. Line 1 is the preliminary initialization; lines 3 and 14 are the way in which $p_0$ and $p_1$ invoke \lock/\unlock.}
  \label{Anderson}
  \begin{algorithmic}[1]

   \State{Initialize P[0], P[1], Q[0], Q[1], T[0] and T[1] to 1}

   \State{\ }

   \State{{\lock}(0) :=    
   \hspace*{5cm}{\lock}(1) := }
   \State\hspace{\algorithmicindent}{P[0] $\leftarrow$ 0}
   \hspace{5.4cm}{P[1] $\leftarrow$ 0}
 
 \State\hspace{\algorithmicindent}{Q[0] $\leftarrow$ 0}
   \hspace{5.38cm}{Q[1] $\leftarrow$ 0}

 \State\hspace{\algorithmicindent}{$x$ $\leftarrow$ T[1]}
   \hspace{5.4cm}{$x$ $\leftarrow \neg$T[0]}
 
 \State\hspace{\algorithmicindent}{T[0] $\leftarrow x$ }
   \hspace{5.3cm}{T[1] $\leftarrow x$ }
    \State\hspace{\algorithmicindent}{{\bf if} $x$ = 1}
\hspace*{5.6cm}{{\bf if} $x$ = 1}
   \State\hspace{\algorithmicindent}{{\bf then\ } P[0] $\leftarrow$ 1}
   \hspace*{4.3cm}{{\bf then\ } Q[1] $\leftarrow$ 1}
   \State\hspace{\algorithmicindent}\hspace{\algorithmicindent}\hspace{\algorithmicindent}{{\bf await} P[1] = 1}\hspace*{4.55cm}{{\bf await} P[0] = 1}
    \State\hspace{\algorithmicindent}{{\bf else\ }\ \ \,Q[0] $\leftarrow$ 1}\hspace*{4.4cm}{{\bf else\ }\ \ \,P[1] $\leftarrow$ 1}
    \State\hspace{\algorithmicindent}\hspace{\algorithmicindent}\hspace{\algorithmicindent}{{\bf await} Q[1] = 1}\hspace*{4.55cm}{{\bf await} Q[0] = 1}
      
   \State{\ }

   \State{{\unlock}(0) := 
     \hspace*{4.8cm}{\unlock}(1) := }
    \State\hspace{\algorithmicindent}{P[0] $\leftarrow$ 1}
   \hspace{5.4cm}{P[1] $\leftarrow$ 1}
 
 \State\hspace{\algorithmicindent}{Q[0] $\leftarrow$ 1}
   \hspace{5.38cm}{Q[1] $\leftarrow$ 1}

\end{algorithmic}
\end{algorithm}  

\begin{remark}
\label{rem:AndersonPDBB}
The protocol is designed for two processes only, $p_0$ and $p_1$, that have different \lock\ protocols. First of all, we want to show that Anderson's algorithm does {\em not} satisfy \df{Bounded bypass}; to see this, consider the following execution:
\begin{itemize}
\item $p_0$ executes line 4 and then sleeps for a long time;
\item then, $p_1$ executes lines 4/../7, finds its $x$ to be 0, jumps to lines 11 and 12, enters its critical section, and unlocks;
\item if now $p_1$ invokes \lock\ again, it will be exactly in the very same situation as before (since T[0] has not changed), and so will enter the critical section again; and so on, until $p_0$ wakes up.
\end{itemize}
This is a concrete example of what we informally discussed above, with the doorway made up of more than one instruction and one process that goes to sleep after executing the first of them.
\end{remark}

In contrast, we now show that Anderson's protocol satisfies \df{Doorway bounded bypass}; 
to ease reasoning, we confine ourselves to the model of atomic read and write registers.

\begin{proposition}{DBBAnderson}
If $p_i$ terminates the execution of line 7 at time $t$, then $p_{1-i}$ can bypass $p_i$ at most once after time $t$.
\end{proposition}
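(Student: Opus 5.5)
The plan is to show that any \lock\ of $p_{1-i}$ that \emph{starts} after time $t$ cannot terminate while $p_i$ is still executing its \lock. This suffices for the proposition. Suppose $p_{1-i}$ bypassed $p_i$ twice after $t$, i.e.\ entered its critical section twice while $p_i$'s \lock\ is pending. Between these two entries, $p_{1-i}$ must perform an \unlock\ and then invoke a fresh \lock. That fresh invocation starts strictly after its first entry into the critical section, hence after $t$. So only the (at most one) \lock\ of $p_{1-i}$ already under way at time $t$ can bypass $p_i$.

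First, I record the invariants that hold for $p_i$ from time $t$ until it completes its \lock. Within its current \lock, $p_i$ writes T[$i$] only at line 7, and it writes to P[$i$], Q[$i$] only at lines 4, 5 and then in one branch of line 8/11. So after $t$ the register T[$i$] keeps the value $x_i$ that $p_i$ wrote. Moreover, exactly one of P[$i$], Q[$i$] stays $0$ until $p_i$'s \unlock. It is Q[$i$] if $p_i$ took the P-branch and P[$i$] if it took the Q-branch, where which branch is ``P'' or ``Q'' depends on $i$ and $x_i$ as in Algorithm~\ref{Anderson}. The other register is set to $1$.

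Next, I consider a \lock\ of $p_{1-i}$ starting after $t$. Its read at line 6 is a read of T[$i$] that begins after $p_i$'s write at line 7 has completed. Under atomic registers, and since $p_i$ does not write T[$i$] again, this read returns $x_i$. Hence $p_{1-i}$ computes a determined $x_{1-i}$: $x_1=\neg x_0$ if $i=0$, and $x_0=x_1$ if $i=1$. I then do the four-way case analysis on $i\in\{0,1\}$ and $x_i\in\{0,1\}$. In each case I check that $p_{1-i}$'s branch makes it \textbf{await} exactly the register of $p_i$ that remains $0$:
\begin{itemize}
\item $i=0$, $x_0=1$: $p_0$ awaits P[1], and Q[0]$=0$ persists; $p_1$ gets $x_1=0$ and awaits Q[0]$=1$.
\item $i=0$, $x_0=0$: P[0]$=0$ persists; $p_1$ gets $x_1=1$ and awaits P[0]$=1$.
\item $i=1$, $x_1=1$: P[1]$=0$ persists; $p_0$ gets $x_0=1$ and awaits P[1]$=1$.
\item $i=1$, $x_1=0$: Q[1]$=0$ persists; $p_0$ gets $x_0=0$ and awaits Q[1]$=1$.
\end{itemize}
Thus $p_{1-i}$ blocks until $p_i$'s \unlock, which sets that register to $1$. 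In particular, $p_{1-i}$ cannot enter its critical section again before $p_i$ does.

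The main obstacle is getting the bookkeeping right rather than any deep idea. One part is making sure the ``persisting $0$'' register really is not reset by $p_i$ before its \unlock; $p_{1-i}$ never writes $p_i$'s registers, since all of them are SWMR. The other part is arguing carefully, via atomicity, that $p_{1-i}$'s read of T[$i$] sees $x_i$. The counting argument reducing ``at most one bypass'' to ``no \lock\ begun after $t$ completes first'' is then routine.
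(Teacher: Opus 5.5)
Your proof is correct, but it is organized differently from the paper's.

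The paper follows $p_i$ forward in time. It spells out only $i=0$ with T[0]$=1$ and declares the other cases similar. It splits on whether $p_0$ blocks at line 10, and then on where $p_1$ is:
\begin{itemize}
\item If $p_1$ reaches line 11, its write P[1]$\,\leftarrow 1$ releases $p_0$ immediately.
\item If $p_1$ is in the \textbf{then} branch, it may enter once. On re-invoking \lock\ it reads T[0]$=1$, takes the \textbf{else} branch and blocks at line 12, again releasing $p_0$.
\end{itemize}

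You instead prove a single invariant about $p_{1-i}$: no \lock\ of $p_{1-i}$ invoked after $t$ can terminate while $p_i$'s \lock\ is pending. This holds because such a \lock\ must read $x_i$ at line 6, and its \textbf{await} then targets exactly the register of $p_i$ that stays $0$ until $p_i$'s \unlock. Your counting step charges the only possible bypass to the \lock\ of $p_{1-i}$ already under way at $t$, so you never need to analyse where $p_{1-i}$ is at time $t$.

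Your route has three advantages. All four cases $(i,x_i)$ are checked explicitly. Atomicity is invoked only where it is needed, namely for the read of T[$i$] and the \textbf{await} reads. And the argument is a clean safety-style argument rather than an operational case split.

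The paper's route additionally traces how $p_i$ actually gets released, i.e.\ it also addresses the ``enters the critical section'' clause required for post-doorway bounded bypass. Your argument does not cover this. That is not part of the proposition as stated, so your proof is complete for it, but when using the proposition to conclude post-doorway bounded bypass, that clause must come from the starvation-freedom of Anderson's algorithm.
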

\begin{proof}
We spell out the case for $i = 0$ and for T[0] = 1 at time $t$ (the other cases, viz. for T[0] = 0 at time $t$ and for $i=1$, are handled similarly). At time $t$, $p_0$ has just finished setting $T[0]$ to $x$, which is $1$ in this case. It then proceeds to find $1$ in $x$ at line 8; so, it executes line 9 and arrives at line 10. Here there are two possible cases: either it finds 1 in P[1] and so it enters the critical section without bypasses, or it blocks at line 10, since P[1] is 0. In this latter case, $p_1$ has invoked \lock\ and it has not yet executed line 11 nor line 15. Let us distinguish two cases now:
\begin{itemize}
\item if $p_1$ eventually arrives at line 11, it sets P[1] to 1 and suspends at line 12; this unblocks $p_0$, that can enter the critical section (without bypasses).
\item if not, this means that $p_1$ will eventually take (or has already taken before time $t$) the {\bf then} branch, executes lines 9 and 10, enters its critical section and then unlocks. If $p_0$ is quick enough to react to the execution of line 15 done by $p_1$ in its \unlock, then $p_0$ enters its critical section (after one bypass). Otherwise, $p_1$ was quick enough to reinvoke \lock\ and execute its first line. However, since it finds 1 in T[0], it sets its $x$ and T[1] to 0, and so takes the {\bf else} branch. This leads it to execute line 11 and to block at line 12; as we saw before, this is what allows $p_0$ to enter its critical section (after one bypass now).\qedhere
\end{itemize}
\end{proof}

\noindent
To ease the previous proof, we considered the first four lines of the \lock\ as the doorway. However, we believe that already after the execution of the first two lines the number of bypasses is bounded. However, the proof gets more complicated and the bound might be greater; so, there is no reason for investigating this further.

\begin{remark}
\label{rem:DekkerPDBB}
To conclude the discussion on \df{Doorway bounded bypass}, we now show that post-doorway bounded bypass strengthens starvation-freedom. By definition, every protocol that satisfies the former also satisfies the latter, whereas Dekker's protocol is an example that satisfies the latter but not the former. To see this, we first observe that the only admissible doorway is formed by the first line of the \lock. Indeed, the following execution shows that it is possible to execute the next instructions an unbounded number of times: 
\begin{itemize}
\item let TURN be 0 at the outset;
\item then, $p_0$ and $p_1$ both simultaneously do lines 4 and 5 of the \lock;
\item $p_1$ then sleeps for an arbitrary long time before proceeding to line 6;
\item $p_0$ proceeds to line 6, finds TURN = 0, and jumps back to line 5;
\item $p_0$ again finds FLAG[1] = 1 and TURN = 0; so, it executes lines 5 and 6 an unbounded number of times (until $p_1$ wakes up).
\end{itemize}
Hence, the only possible doorway is line 4 alone; this means that post-doorway bounded bypass for this protocol coincides with bounded bypass as defined in \df{Bounded bypass}. As we discussed above, Dekker's algorithm does {\em not} satisfy \df{Bounded bypass}, so neither \df{Doorway bounded bypass}.
\end{remark}

\subsection{Intermittent Bounded Bypass}\label{sec:ibb}\mbox{}

\bigskip
\noindent
A second alternative to \df{Bounded bypass} is to allow arbitrary bypasses not only during the execution of the very first instruction of the \lock, but during any finite set of (write) actions. 
However, these actions do not cover the period of time between the invocation of the \lock\ and the starting of its first instruction (something that, in contrast, is included in the ``non-counting'' period of Definitions \ref{df:Bounded bypass} and \ref{df:Doorway bounded bypass}). Hence, to define the following notion, we say that a function (in particular, \lock) starts when its first instruction starts.
This leads to the following notion of bounded bypass, that we call {\em intermittent} because such ‘‘bypassing moments'' can arbitrarily appear during the execution of the \lock.

\begin{definition}{intermittent}
A {\MUTEX} protocol satisfies an {\em intermittent bounded-bypass of $f(n)$ within $h(n)$ interrupting assignments} if a set $S$ of assignments to shared registers can be defined, such that each time a process $p_i$ starts  its \lock, 
it will reach its critical section after a period $\pi$ such that
  \begin{itemize}
  \item at most $h(n)$ assignments in $S$ are made within $\pi$; and
  \item $p_i$ enters its critical section losing at most $f(n)$ competitions with the other processes in those parts of $\pi$ that are outside the execution of assignments in $S$.
  \end{itemize}
\end{definition}
\reversemarginpar

The writes in $S$ are referred to as ``interrupting assignments''.
Intuitively, the intermittent bounded-bypass property weakens the bounded-bypass one by allowing a (potentially) arbitrary number of bypasses to happen during the execution of some selected assignments; indeed, bounded bypass, as in \df{Bounded bypass}, is a form of intermittent bounded bypass with one interrupting assignment, namely the first instruction of the \lock\ (that should be a write to the shared memory).
Unlike bounded bypass, where the only interrupting assignment is an assignment by the relevant process itself, intermittent bounded bypass allows \emph{any} assignment to be defined as interrupting, including those performed by other processes.
Pictorially, we can represent the intermittent scenario as follows:

\expandafter\ifx\csname graph\endcsname\relax
   \csname newbox\expandafter\endcsname\csname graph\endcsname
\fi
\ifx\graphtemp\undefined
  \csname newdimen\endcsname\graphtemp
\fi
\expandafter\setbox\csname graph\endcsname
 =\vtop{\vskip 0pt\hbox{%
    \graphtemp=.5ex
    \advance\graphtemp by 0.850in
    \rlap{\kern 0.000in\lower\graphtemp\hbox to 0pt{\hss time\hss}}%
\pdfliteral{
q [] 0 d 1 J 1 j
0.576 w
0.072 w
q 0 g
403.2 -59.4 m
410.4 -61.2 l
403.2 -63 l
403.2 -59.4 l
B Q
0.576 w
14.4 -61.2 m
403.2 -61.2 l
S
Q
}%
    \graphtemp=.5ex
    \advance\graphtemp by 0.650in
    \rlap{\kern 0.300in\lower\graphtemp\hbox to 0pt{\hss \Large $|$\hss}}%
\pdfliteral{
q [] 0 d 1 J 1 j
0.576 w
21.6 -45.36 m
396 -45.36 l
S
Q
}%
    \graphtemp=.5ex
    \advance\graphtemp by 0.750in
    \rlap{\kern 0.600in\lower\graphtemp\hbox to 0pt{\hss $\lock(i)$\hss}}%
    \graphtemp=.5ex
    \advance\graphtemp by 0.650in
    \rlap{\kern 5.500in\lower\graphtemp\hbox to 0pt{\hss \Large $|$\hss}}%
    \graphtemp=.5ex
    \advance\graphtemp by 0.410in
    \rlap{\kern 0.300in\lower\graphtemp\hbox to 0pt{\hss $\vdots$\hss}}%
    \graphtemp=.5ex
    \advance\graphtemp by 0.450in
    \rlap{\kern 0.600in\lower\graphtemp\hbox to 0pt{\hss $|$\hss}}%
    \graphtemp=.5ex
    \advance\graphtemp by 0.350in
    \rlap{\kern 0.950in\lower\graphtemp\hbox to 0pt{\hss $R_1 \leftarrow v_1$\hss}}%
\pdfliteral{
q [] 0 d 1 J 1 j
0.576 w
43.2 -32.4 m
93.6 -32.4 l
S
Q
}%
    \graphtemp=.5ex
    \advance\graphtemp by 0.450in
    \rlap{\kern 1.300in\lower\graphtemp\hbox to 0pt{\hss $|$\hss}}%
    \graphtemp=.5ex
    \advance\graphtemp by 0.450in
    \rlap{\kern 1.600in\lower\graphtemp\hbox to 0pt{\hss $|$\hss}}%
    \graphtemp=.5ex
    \advance\graphtemp by 0.350in
    \rlap{\kern 1.950in\lower\graphtemp\hbox to 0pt{\hss $R_2 \leftarrow v_2$\hss}}%
\pdfliteral{
q [] 0 d 1 J 1 j
0.576 w
115.2 -32.4 m
165.6 -32.4 l
S
Q
}%
    \graphtemp=.5ex
    \advance\graphtemp by 0.450in
    \rlap{\kern 2.300in\lower\graphtemp\hbox to 0pt{\hss $|$\hss}}%
    \graphtemp=.5ex
    \advance\graphtemp by 0.450in
    \rlap{\kern 3.000in\lower\graphtemp\hbox to 0pt{\hss $\cdots$\hss}}%
    \graphtemp=.5ex
    \advance\graphtemp by 0.450in
    \rlap{\kern 3.670in\lower\graphtemp\hbox to 0pt{\hss $|$\hss}}%
    \graphtemp=.5ex
    \advance\graphtemp by 0.350in
    \rlap{\kern 4.170in\lower\graphtemp\hbox to 0pt{\hss $R_{h(n)} \leftarrow v_{h(n)}$\hss}}%
\pdfliteral{
q [] 0 d 1 J 1 j
0.576 w
264.24 -32.4 m
336.24 -32.4 l
S
Q
}%
    \graphtemp=.5ex
    \advance\graphtemp by 0.450in
    \rlap{\kern 4.670in\lower\graphtemp\hbox to 0pt{\hss $|$\hss}}%
    \graphtemp=.5ex
    \advance\graphtemp by 0.410in
    \rlap{\kern 5.500in\lower\graphtemp\hbox to 0pt{\hss $\vdots$\hss}}%
    \graphtemp=.5ex
    \advance\graphtemp by 0.250in
    \rlap{\kern 0.450in\lower\graphtemp\hbox to 0pt{\hss $\overbrace{\rule{.3in}{0pt}}$\hss}}%
    \graphtemp=.5ex
    \advance\graphtemp by 0.150in
    \rlap{\kern 0.449in\lower\graphtemp\hbox to 0pt{\hss $\scriptscriptstyle |$\hss}}%
    \graphtemp=.5ex
    \advance\graphtemp by 0.250in
    \rlap{\kern 1.450in\lower\graphtemp\hbox to 0pt{\hss $\overbrace{\rule{.3in}{0pt}}$\hss}}%
    \graphtemp=.5ex
    \advance\graphtemp by 0.150in
    \rlap{\kern 1.449in\lower\graphtemp\hbox to 0pt{\hss $\scriptscriptstyle |$\hss}}%
    \graphtemp=.5ex
    \advance\graphtemp by 0.250in
    \rlap{\kern 2.600in\lower\graphtemp\hbox to 0pt{\hss $\overbrace{\rule{.6in}{0pt}}$\hss}}%
    \graphtemp=.5ex
    \advance\graphtemp by 0.150in
    \rlap{\kern 2.599in\lower\graphtemp\hbox to 0pt{\hss $\scriptscriptstyle |$\hss}}%
    \graphtemp=.5ex
    \advance\graphtemp by 0.250in
    \rlap{\kern 3.370in\lower\graphtemp\hbox to 0pt{\hss $\overbrace{\rule{.6in}{0pt}}$\hss}}%
    \graphtemp=.5ex
    \advance\graphtemp by 0.150in
    \rlap{\kern 3.369in\lower\graphtemp\hbox to 0pt{\hss $\scriptscriptstyle |$\hss}}%
    \graphtemp=.5ex
    \advance\graphtemp by 0.250in
    \rlap{\kern 5.090in\lower\graphtemp\hbox to 0pt{\hss $\overbrace{\rule{.835in}{0pt}}$\hss}}%
    \graphtemp=.5ex
    \advance\graphtemp by 0.150in
    \rlap{\kern 5.089in\lower\graphtemp\hbox to 0pt{\hss $\scriptscriptstyle |$\hss}}%
\pdfliteral{
q [] 0 d 1 J 1 j
0.576 w
32.4 -7.92 m
366.48 -7.92 l
S
Q
}%
    \graphtemp=.5ex
    \advance\graphtemp by 0.000in
    \rlap{\kern 2.900in\lower\graphtemp\hbox to 0pt{\hss $\leq f(n)$ total bypasses\hss}}%
    \hbox{\vrule depth0.850in width0pt height 0pt}%
    \kern 5.700in
  }%
}%

\centerline{\box\graph}
\vspace{2ex}

\noindent
Similar to the bounded bypass case, also here the first instruction of a \lock\ must be a write to the shared memory; moreover, in this setting, it always counts as interrupting.
We remark that, in reality, each interrupting assignment will take a very short time, in which it is in practice quite unlikely that many bypasses happen.

Clearly, intermittent bounded bypass implies starvation-freedom. 
The reverse implication does {\em not} hold: this is proved by the  execution of Dekker's protocol given in Remark~\ref{rem:DekkerBB}. 
In the next section, we shall present a concrete protocol that satisfies this notion of bounded bypass (in the setting of safe and regular registers), but not the notions of \df{Bounded bypass} and \df{Doorway bounded bypass}. Intermittent bounded bypass is not a generalization of post-doorway bounded bypass: the latter includes the period between the instructions of the doorway, and also allows the doorway to contain reads; the former only includes the duration of each interrupting assignment.

\subsection{A Hierarchy of Liveness Properties}\mbox{}

To conclude this section, we present a hierarchy of liveness properties that includes the notions of bounded bypass devised so far and that relates them to starvation- and deadlock-freedom. This generalizes the chain of implications intended by Raynal in \cite[pg.~11]{Raynal}.

\begin{theorem}{thm:hierarchy}
It holds that
\[
\xymatrix@C=20pt@R=0pt{
& \mbox{post-doorway b.\ b.} \ar@{=>}[dr] &
\\
\mbox{bounded bypass} \ar@{=>}[dr] \ar@{=>}[ur] && \mbox{starvation freedom} \ar@{=>}[r] & \mbox{deadlock freedom}
\\
& \mbox{\ intermittent b.\ b.\ } \ar@{=>}[ur] & \\
}
\]
Moreover, all missing implications do not hold.
\end{theorem}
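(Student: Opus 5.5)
The plan is to prove the four positive implications directly from the definitions, and then to refute every missing implication with a concrete protocol. Several of the needed separating examples are already in the text: Remark~\ref{rem:DekkerBB}, Remark~\ref{rem:AndersonPDBB}, Proposition~\ref{pr:DBBAnderson} and Remark~\ref{rem:DekkerPDBB}. The protocol announced for the next section (Bar-David's construction over safe registers) supplies one more.

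The positive arrows are routine.
\begin{itemize}
\item \textbf{Bounded bypass $\Rightarrow$ post-doorway b.b.} Take the doorway to be the first instruction of the \lock, as already observed after \df{Doorway bounded bypass}.
\item \textbf{Bounded bypass $\Rightarrow$ intermittent b.b.} Let $S$ be the set of first instructions of all \lock s, and take $h(n)=n$, or even $h(n)=1$ if only $p_i$'s own first write is counted. Then the bypasses outside $S$-assignments within $\pi$ are among those counted after $p_i$'s first instruction, so at most $f(n)$. The subtlety is that bypasses during other processes' first writes are merely ignored, which only weakens the requirement. The non-counting period before the first instruction starts is excluded from $\pi$ by definition, so nothing is lost.
\item \textbf{Post-doorway b.b. and intermittent b.b. $\Rightarrow$ starvation-freedom.} Both definitions explicitly require that every invoking process reaches its critical section.
\item \textbf{Starvation-freedom $\Rightarrow$ deadlock-freedom.} If some process is in its \lock\ at time $t$, that very process later enters.
\end{itemize}

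The missing implications follow from the strictness of the arrows and the incomparability of the middle two notions.
\begin{itemize}
\item \textbf{Deadlock-freedom $\not\Rightarrow$ starvation-freedom.} Use a standard deadlock-free but starvation-prone protocol. One candidate is a single-flag-with-priority lock in which $p_0$ can always re-win. Another is the protocol from Raynal in which the lower-indexed process always wins ties; I would spell out a two-process example and an infinite run in which $p_1$ starves.
\item \textbf{Starvation-freedom $\not\Rightarrow$ either middle notion.} Dekker's algorithm is starvation-free. By Remark~\ref{rem:DekkerPDBB} it is not post-doorway b.b. By the run in Remark~\ref{rem:DekkerBB} it is not intermittent b.b. either: in that run $p_1$ sleeps after line 7 rather than during a write, while $p_0$ bypasses it unboundedly. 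Any finite choice of $S$ and $h$ cannot absorb these bypasses, because they occur outside all assignments; I need to argue this carefully, since $p_0$'s own writes could be put in $S$, but only $h(n)$ of them may occur within $\pi$.
\item \textbf{Post-doorway $\not\Rightarrow$ intermittent, and hence $\not\Rightarrow$ bounded bypass.} Anderson's algorithm is post-doorway b.b. by Proposition~\ref{pr:DBBAnderson}, and Remark~\ref{rem:AndersonPDBB} shows it is not b.b. For non-intermittence, the run in that remark has $p_0$ sleeping between instructions after line 4, not during a write. So again unboundedly many bypasses occur outside any bounded number of $S$-writes.
\item \textbf{Intermittent $\not\Rightarrow$ post-doorway, and hence $\not\Rightarrow$ bounded bypass.} Use the Bar-David algorithm with safe registers from the later section. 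There a long overlapping write lets reads return arbitrary values and causes unbounded bypasses during that write, uninterrupted by any doorway boundary.
\end{itemize}
All remaining non-implications follow by transitivity from these.

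The main obstacle is this last separation: exhibiting a protocol that is intermittent b.b. but not post-doorway b.b. One issue is that it relies on results proved later. The other is the quantification over all possible doorways, which makes "not post-doorway" delicate. I would first try an ad hoc small protocol: a starvation-free lock whose first write is a long safe write that others read repeatedly, so that any doorway must end before bypasses become unbounded. I would fall back on the later theorem and a model-checking confirmation if that fails.
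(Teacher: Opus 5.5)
Your overall structure matches the paper's proof. You prove the positive arrows from the definitions, use a standard example for deadlock- versus starvation-freedom, and use Dekker against both middle notions. You use Anderson for post-doorway $\not\Rightarrow$ intermittent (hence $\not\Rightarrow$ bounded bypass), and Bar-David over safe registers for intermittent $\not\Rightarrow$ post-doorway. Your transitivity bookkeeping is also correct.

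\textbf{The gap: intermittent $\not\Rightarrow$ post-doorway.} You leave this separation open, and as described it would not work. Bypasses during a long write refute post-doorway bounded bypass only if the \emph{bypassed} process has already finished its doorway. If the long write is the victim's own first write, every doorway absorbs those bypasses.

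The missing idea is the three-process run of Proposition~\ref{pr:No Bounded Bypass}:
\begin{itemize}
\item TURN $=0$ and $p_0$ stays inside its W1.
\item $p_1$ has already completed W1, keeps reading FLAG[0] $=1$ in R2, and so spins in the \textbf{repeat} loop.
\item $p_2$ keeps reading FLAG[0] $=0$ in R2, enters, and reads FLAG[0] $=1$ in R4, so TURN never moves.
\end{itemize}
Because $p_1$'s loop can perform unboundedly many actions, no admissible doorway extends beyond W1; this is the same argument as in Remark~\ref{rem:DekkerPDBB}. In any case, $p_1$ completes whatever bounded doorway is chosen and is then bypassed by $p_2$ without bound. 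This handles the quantification over doorways that you flag as delicate. Your fallback does not: Theorem~\ref{thm:bib} only supplies the positive, intermittent side. Model checking only tests finitely many bounds and doorway placements for a fixed $n$.

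\textbf{A smaller slip in bounded bypass $\Rightarrow$ intermittent.} Taking $S$ to be all first instructions with $h(n)=n$ fails. While $p_i$'s own first write is in progress, bounded bypass places no limit, so other processes may run unboundedly many \lock--\unlock\ cycles. Each cycle contributes a first write inside $\pi$, so more than $h(n)$ assignments in $S$ occur. Your alternative is the one to use: $S$ contains only $p_i$'s own first write and $h(n)=1$. This is how the paper reads this implication.
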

\begin{proof}
All implications of the Theorem hold by definition; we are left to prove that no other implication holds.
(To better visualize a sample protocol that satisfies/does not satisfy a property, see Table~\ref{tab:summary}.)
\begin{itemize}
\item It is well-known that deadlock-freedom does not imply starvation-freedom (see, e.g., \cite{Raynal}). 
\item {\em Starvation-freedom does not imply post-doorway/intermittent bounded bypass:} Dekker’s algorithm satisfies starvation-freedom; however, it does not satisfy any form of bounded
bypass (see Remark~\ref{rem:DekkerPDBB} for post-doorway bounded bypass and the execution contained in Remark~\ref{rem:DekkerBB} for intermittent bounded bypass).
\item {\em Post-doorway bounded bypass does not imply bounded bypass:} Anderson’s algorithm does not satisfy \df{Bounded bypass} (Remark~\ref{rem:AndersonPDBB}), but satisfies \df{Doorway bounded bypass} (Proposition \ref{pr:DBBAnderson}).

\item {\em Intermittent bounded bypass does not imply bounded bypass:} In the next section, we will present Algorithm~\ref{alg}, which in the setting of safe registers does not satisfy bounded bypass (Proposition \ref{pr:No Bounded Bypass}) but does satisfy intermittent bounded bypass (Theorem~\ref{thm:bib}).

\item {\em Intermittent bounded bypass and post-doorway bounded bypass are logically unrelated:} If a protocol satisfies intermittent bounded bypass, the $h(n)$ write actions may not be included in the execution of a piece of code that we consider an acceptable doorway (actually, they may also be actions belonging to the execution of the \lock\ protocol by different processes
 -- see, e.g., the proof of Lemma~\ref{lem:bounded on interrupts} later on); hence, it needs not satisfy post-doorway bounded bypass. In fact, the forthcoming Algorithm~\ref{alg} does not satisfy post-doorway bounded bypass (this can be obtained by a reasoning similar to Proposition \ref{pr:No Bounded Bypass}).

For the converse, consider again Anderson's protocol; it is possible that, by stopping a process after the execution of the first instruction, it can be bypassed an unbounded number of times before it executes its second instruction. This is enough to show that this protocol fails to satisfy intermittent bounded bypass.\qedhere
\end{itemize}
\end{proof}

\begin{table}[t]
\centering
\begin{tabular}{l|c||clc}
\textbf{Protocol} & \multicolumn{1}{l||}{\textbf{Memory model}} & \multicolumn{1}{l|}{\textbf{B.B.}} & \multicolumn{1}{l|}{\textbf{Post-doorway B.B.}} & \multicolumn{1}{l}{\textbf{Intermittent B.B.}} \\ \hline \hline
\textit{Dekker} & Atomic & \multicolumn{3}{c}{X} \rule{0pt}{12pt}\\ \hline
\textit{Anderson} & Atomic & \multicolumn{1}{c|}{X} & \multicolumn{1}{c|}{$f(n) = 1$} & X \rule{0pt}{12pt}\\ \hline
\textit{Bar-David} & Atomic & \multicolumn{2}{c|}{$f(n) = n(n{-}1){-}1$} & \begin{tabular}[c]{@{}c@{}}$f(n) = n(n{-}1){-} 1$\\ $h(n) = 1$\end{tabular} \rule{0pt}{20pt}\rule[-12pt]{0pt}{0pt}\\ \hline
\rule{0pt}{20pt}\textit{Bar-David} & Safe/Regular & \multicolumn{2}{c|}{X} & \begin{tabular}[c]{@{}c@{}}$f(n) = n^2{-}2$\\ $h(n) = n$\end{tabular}
\end{tabular}
\caption{Summary of the Bounded Bypass (B.B.) properties satisfied by Algorithms \ref{Dekker}, \ref{Anderson} and \ref{alg} (to be presented in the next section), according to the memory models adopted (safe/regular/atomic registers). Here, ‘X' means that the property is not satisfied and an arithmetical expression means that the property is satisfied with that bound.}
\label{tab:summary}
\end{table}

Of course, we can combine the two notions of bounded bypass into a fourth property, called {\em post-doorway intermittent bounded bypass}, that ignores the bypasses that may happen during an initial portion of the \lock\ {\em and} during a bounded number of writes that may happen in successive moments. This property still strictly implies starvation-freedom (again, Dekker's algorithm can be used to show this).
However, we are not aware of any documented protocol that merely satisfies this fourth form of bounded bypass.

\section{On Enhancing the Liveness Property of any Deadlock-free Algorithm}
\label{sec:alg}

As we have just discussed, the different forms of bounded bypass are among the strongest liveness properties for {\MUTEX} protocols that occur in the literature. We now show that any deadlock-free {\MUTEX} protocol, call it DLF, can be exploited to generate a new {\MUTEX} protocol that satisfies (intermittent) bounded bypass.
 This idea was first presented in \cite[Exercise 2.34]{T}, and declared to be devised by Yoah Bar-David in 1998; it was then formally developed in \cite[Sect.\ 2.2.2]{Raynal}, where it was proved to promote any deadlock-free protocol to a starvation-free one, in the setting of atomic registers.
The proposed protocol is reported as Algorithm \ref{alg}.

This protocol maintains a Boolean variable FLAG[$i$] for each process $p_i$ (for $i \in \{0,\dots,\linebreak[1] n{-}1\}$),
as well as a variable TURN that can take any value in $\{0,\dots,n{-}1\}$. The variable FLAG[$i$] tells whether $p_i$ wants to enter its critical section; it is set to $1$ at the beginning of the {\lock} function of $p_i$, and to $0$ at the beginning of its \unlock. The variable TURN tells which is the most privileged process to enter the critical section; its value cycles through all process IDs, and is incremented only in the \unlock, whenever $p_{\text{TURN}}$ has not signaled its intention to enter the critical section. For the rest, the protocol is extremely simple: a process $p_i$ that wishes to enter the critical section waits until either TURN becomes $i$, or $p_{\text{TURN}}$ has not signaled intention to enter. After this wait, the process runs the {\lock} of the given protocol DLF and, naturally, the {\unlock} of DLF is run to complete the protocol.

\cite{Raynal,T} work in the setting of atomic registers, for which they prove that Algorithm \ref{alg} promotes DLF to a starvation-free protocol. In the next two sections, we prove that the very same protocol (1) satisfies bounded bypass (with a bound that is quadratic in the number of processes involved) for atomic registers, and (2) promotes DLF to an intermittent bounded bypass protocol also in the (weaker) settings of safe and regular registers.

\begin{algorithm}[t]
  \caption{Enhancing the Liveness of a deadlock-free {\MUTEX} protocol (Bar-David 1998). Line~1 is the preliminary initialization; lines 3 and 10 are the way in which \lock/\unlock\ are invoked by process $p_i$, for $i \in \{0,...,n{-}1\}$. The symbolic names to read/write actions to the shared memory are used in the proofs.}
  \label{alg}
  \begin{algorithmic}[1]

   \State{Initialize FLAG[$i$] to 0, for all $i$, and TURN to any process ID ($\in \{0,\ldots,n{-}1\}$)}

   \State{\ }

   \State{{\lock}($i$) := }
   \State\hspace{\algorithmicindent}{FLAG[$i$] $\leftarrow$ 1} \Comment{W1}
   \State\hspace{\algorithmicindent}{\bf repeat}
   \State\hspace{\algorithmicindent}\hspace{\algorithmicindent}{$tmp \gets $ TURN} \Comment{R1}
   \State\hspace{\algorithmicindent}{{\bf until }($tmp = i\ \vee$ FLAG[$tmp$] = 0}) \Comment{R2}
   \State\hspace{\algorithmicindent}{DLF.\lock($i$)}
   
   \State{\ }

   \State{{\unlock}($i$) := }
   \State\hspace{\algorithmicindent}{FLAG[$i$] $\leftarrow$ 0} \Comment{W2}
   \State\hspace{\algorithmicindent}{$tmp \gets $ TURN} \Comment{R3}
   \State\hspace{\algorithmicindent}{{\bf if }(FLAG[$tmp$] = 0){\ \bf then}} \Comment{R4}
   \State\hspace{\algorithmicindent}\hspace{\algorithmicindent}{TURN $\gets (tmp+1) \mod\ n$} \Comment{W3}
   \State\hspace{\algorithmicindent}{DLF.\unlock($i$)}
   
\end{algorithmic}
\end{algorithm}  

To carry out the proofs, we find it convenient to refer to the read/write actions (of shared registers) as follows:
\begin{itemize}
\item W1/W2/W3 are the write actions happening, respectively, in lines 4, 11 and 14;
\item R1/R2/R3/R4 are the read actions happening, respectively, in lines 6, 7, 12 and 13.
\end{itemize}

\subsection{The Algorithm with Atomic Registers}\mbox{}

\bigskip
\noindent
For atomic registers, \cite{Raynal} proves that Algorithm \ref{alg} promotes any deadlock-free {\MUTEX} protocol to a starvation-free one; here, we push this result further and prove that the resulting protocol enjoys bounded bypass, with a bound that is quadratic in the number of processes.

Since DLF satisfies mutual exclusion---never will multiple processes be in the critical section---so does Algorithm \ref{alg}. We start by showing that Algorithm \ref{alg} is deadlock-free.

\begin{lemma}{dlf}
If at some time $t$ a process is running its \lock, then afterwards a process will enter its critical section.
\end{lemma}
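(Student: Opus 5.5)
We argue by contradiction. Suppose that at time $t$ some process is running its \lock, but after $t$ no process ever enters its critical section. Processes are reliable and every read and write terminates, so each process eventually settles in one of four places. It may stay forever outside \lock/\unlock. It may finish its \unlock: any process in its critical section or in its \unlock\ at time $t$ leaves it, since \unlock\ contains no waiting. It may be stuck forever in DLF.\lock. Or it may be stuck forever in the repeat loop of lines 5--7.

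\textbf{Case 1: some process is stuck in DLF.\lock.} Such a process is in DLF.\lock\ from some time $t'$ on. DLF is deadlock-free, so some process enters its critical section after $t'$, which contradicts the assumption. Hence from some time $t_1 \geq t$ on, DLF.\lock\ is empty.

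\textbf{Case 2: no process is in DLF.\lock.} Since no process is in DLF.\lock\ after $t_1$, and no process enters its critical section, no process executes \unlock\ from some time $t_2 \geq t_1$ on. Then W3 is never executed after $t_2$, so TURN has a stable value $k$. Moreover, each FLAG$[j]$ is written only at W1 and W2, so each flag is eventually stable.

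The process running \lock\ at $t$ cannot be in DLF.\lock\ (by Case 1), so it is looping in lines 5--7 forever. Call it $p_i$; its FLAG$[i]$ is stably $1$. There are two subcases.
\begin{itemize}
\item If $k=i$, then $p_i$'s next R1 reads $i$ and $p_i$ exits the loop.
\item If $k\neq i$, consider $p_k$. If $p_k$ is not in \lock, then FLAG$[k]$ is stably $0$: $p_k$ is either idle, having never set its flag, or has completed \unlock, which reset it via W2. In that case $p_i$ exits at R2. If instead $p_k$ is in \lock, it is looping in lines 5--7, since it cannot be in DLF.\lock\ or beyond. But R1 then returns $k$ to $p_k$, so $p_k$ exits the loop.
\end{itemize}
Either way some process proceeds into DLF.\lock, contradicting Case 1. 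In applying this, I use atomicity: after the stabilization time, every read returns the stable value.

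The main obstacle is making the stabilization argument rigorous. One must justify that after some finite time no W1, W2 or W3 is in progress. One must also handle a process that is mid-\unlock\ or mid-critical-section at $t$. The claim that these phases end in finite time relies on the reliability of processes and on the termination of reads and writes. Finally, a process idle after finishing \unlock\ has FLAG $=0$, and one that never started has its initial $0$.
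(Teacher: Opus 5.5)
Your proof is correct and follows essentially the same route as the paper: once no \unlock\ is running, TURN is frozen at some $k$, and a process looping forever in the {\bf repeat} forces $k \neq i$ and $p_k$ to be inside its \lock. Then $p_k$ reads TURN $=k$ and escapes the loop, contradicting (via the deadlock-freedom of DLF) that DLF.\lock\ is never entered after $t$; the only cosmetic differences are that you also wait for all FLAG registers to stabilize and case-split on whether $p_k$ is in its \lock, whereas the paper infers the latter directly from $p_i$ repeatedly reading FLAG[$k$] $=1$ after TURN is frozen.
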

\begin{proof}
  Since DLF enjoys deadlock-freedom, it suffices to prove that at least one process either is running DLF.\lock\ at time $t$, or invokes DLF.\lock\ after time $t$. Towards a contradiction, assume that at time $t$ no process is running DLF.\lock\, and no process ever will. 
  Let $t' (\geq t)$ be the time in which any $p_j$ that has already completed DLF.\lock\ at time $t$ reaches line~15 (if no such $p_j$ exists, then $t' {=} t$).
Let TURN\,=\,$k$ at time $t'$; then TURN will remain $k$ forever.

Let $p_i$ be running its {\lock} at time $t$. By our assumption, it is not running DLF.{\lock} and will never invoke it; this is possible only if $p_i$ remains in the {\bf repeat} loop forever and, in particular, after time $t'$. For this to happen, $i \neq k$ and, after time $t'$, $p_i$ reads TURN = $k$ in R1 and FLAG[$k$] = 1 in R2. But in that case $p_k$ is also running its \lock\ at time $t'$; it finds TURN = $k$ in R1, exits the {\bf repeat} loop, and thus eventually invokes DLF.{\lock}. This yields the required contradiction.
\end{proof}

Assume that $n>1$.
Since DLF is assumed to be a correct {\MUTEX} protocol, at most one process can be executing the first segment of \unlock, lines 11--14. Namely, two processes doing this are, from the perspective of DLF, indistinguishable from two processes in the critical section, and that has to be excluded. As a consequence, the variable TURN cannot change between R3 and R4, so that R4 actually checks whether FLAG[TURN] $=0$.

\begin{definition}{stably}
We say that the variable TURN is \emph{stably} $i$ \emph{at time $t$}, notation TURN $\equiv_t i$, whenever at time $t$ TURN $= i$ and no process $j$ is running the part of \unlock$(j)$ prior to line 15.
\end{definition}

\begin{lemma}{lOne}
If TURN $\equiv_t i$ and FLAG[$i$] = 1 at time $t$, then no process different from $p_i$ can progress from line 3 to line 8 between time $t$ and the moment in which $p_i$ executes W2.
\end{lemma}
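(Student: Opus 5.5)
The plan is to establish an invariant that holds on the whole interval $[t, t_2]$, where $t_2$ is the moment $p_i$ executes W2. The invariant is: TURN $= i$ and FLAG[$i$] $=1$. From it, the claim will follow directly by inspecting the exit condition of the \textbf{repeat} loop.

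First, FLAG[$i$] stays $1$ throughout $[t,t_2)$. FLAG[$i$] is an SWMR register written only by $p_i$. Since FLAG[$i$] $=1$ at time $t$, $p_i$ has executed W1 and has not yet executed W2 of its current invocation. Its next write to FLAG[$i$] is therefore W2 itself, at time $t_2$.

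Second, TURN stays $i$ throughout $[t,t_2)$. TURN is modified only by W3. Since TURN $\equiv_t i$, no process is inside lines 11--14 at time $t$. Hence any W3 after $t$ is performed by some process $p_j$ that executes W2, R3, R4 and then W3 entirely after $t$.

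I will take the first W3 after $t$ and suppose, for contradiction, that it occurs before $t_2$. By the earlier observation that at most one process runs lines 11--14 at a time, TURN cannot change between R3 and R4. So $p_j$ read $i$ in R3 and then read FLAG[$i$] in R4. By atomicity, that read returned $1$, because FLAG[$i$] $=1$ on the whole interval. Then the test at line 13 fails and W3 is not executed, a contradiction.

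One subtle case remains. Could $p_j$ be $p_i$ itself? No: $p_i$ executes W2 only at $t_2$, and its W3 would come after that.

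Third, I apply the invariant to the loop. Let $p_j \neq p_i$ progress from line 3 to line 8 within $[t,t_2]$. Then $p_j$ must exit the \textbf{repeat} loop via a final R1/R2 pair whose reads take effect after $t$. In R1 it reads TURN $=i\neq j$. In R2 it reads FLAG[$i$] $=1$ at an execution point before $t_2$, where the invariant applies. So the \textbf{until} condition is false and $p_j$ loops again, which gives the contradiction.

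The step I expect to be the main obstacle is fixing what ``progress between $t$ and $t_2$'' means for a process already in the loop at time $t$. I intend it to mean that the exiting R2 has its execution point in $[t,t_2)$. An R2 straddling $t_2$ could see $0$, but then its execution point lies after W2, which is outside the interval.

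A second point needing care is that $p_j$'s R1 may precede other processes' actions. This is harmless, because it only reads the value of TURN, which is $i$ throughout the interval.
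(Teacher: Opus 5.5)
Your invariant (FLAG[$i$] $=1$ and TURN $=i$ on $[t,t_2)$), followed by inspecting the \textbf{until} condition, is the paper's argument. Your ``first W3 after $t$'' step correctly fills in why TURN cannot change, which the paper leaves implicit. The flaw is in how you resolve what you yourself flagged as the main obstacle. If ``progress'' only requires the exiting R2 to take effect in $[t,t_2)$, then a process already in the loop at time $t$ may have performed its final R1 \emph{before} $t$, while TURN still held another value. Two of your claims then fail: that the final R1/R2 pair takes effect after $t$, and that an early R1 is harmless ``because TURN is $i$ throughout the interval''. In fact the statement is false under your reading. Take $n=3$, $i=1$, TURN initially $0$:
\begin{itemize}
\item $p_0$ does W1, reads $0$ in R1, and enters its critical section;
\item $p_1$ does W1 and spins;
\item $p_2$ does W1, and its R1 returns $0$;
\item $p_0$ executes W2, R3 (reading $0$), R4 (reading FLAG[0] $=0$) and W3, setting TURN to $1$, and reaches line 15.
\end{itemize}
At that moment $t$, TURN $\equiv_t 1$ and FLAG[1] $=1$. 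Yet $p_2$ now executes R2, reads FLAG[0] $=0$, and reaches line 8 well before $p_1$ executes W2.

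The intended reading is the one Corollary~\ref{cor:cOne} relies on (``a process $j\neq i$ can enter its critical section after time $t$ \ldots\ only if $p_j$ invoked its \lock\ before time $t$''). It requires the whole passage from line 3 to line 8 to lie in the interval, i.e.\ $p_j$ invokes \lock\ after $t$. Then every R1 and R2 of that invocation takes effect after $t$, and the exiting pair before $t_2$. So at each of them your invariant gives TURN $=i\neq j$ and FLAG[$i$] $=1$, and your argument goes through unchanged. You should therefore drop the extension to processes already spinning at time $t$ rather than try to handle it. Those processes are exactly the ones the corollary must count separately, at most one loss per process.
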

\begin{proof}
  Under the given hypotheses, FLAG[$i$] remains 1 and the value of TURN does not change from time $t$ until $p_i$ executes W2.
  Until then, no process other than $i$ can progress from line 3 to line 8, since every other $p_j$ gets blocked in its {\bf repeat}.
\end{proof}

\begin{corollary}{cOne}
  If TURN $\equiv_t i$ and FLAG[$i$] = 1 at some time $t$ when $p_i$ is running its \lock, then $p_i$ enters the critical section after losing at most one competition with each of the other processes. 
\end{corollary}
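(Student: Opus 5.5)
The plan is to combine Lemma~\ref{lem:lOne} with the deadlock-freedom of Algorithm~\ref{alg} (Lemma~\ref{lem:dlf}) and the mutual exclusion guaranteed by DLF. Fix $t$ with TURN $\equiv_t i$, FLAG[$i$] $=1$, and $p_i$ running its \lock. Note first that, since $p_i$ is in its \lock\ and FLAG[$i$] $=1$ at time $t$, $p_i$ has already completed W1 and will not execute W2 until after it has entered and left its critical section. By Lemma~\ref{lem:lOne}, throughout the interval from $t$ until $p_i$ executes W2 (in particular, until $p_i$ enters its critical section), no process $p_j$ with $j\neq i$ can progress from line 3 to line 8. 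Hence every other process that wins a competition against $p_i$ after time $t$ must already, at time $t$, be past its \textbf{repeat} loop, i.e.\ be executing (or about to execute) DLF.\lock\ at line 8; it cannot be in the \unlock\ before line 15, by stability of TURN at $t$.

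The first step is then to argue that each $p_j$ ($j\neq i$) can bypass $p_i$ at most once after $t$: if $p_j$ completes its \lock\ after $t$ and before $p_i$ does, then after its critical section and \unlock\ any new invocation of \lock\ by $p_j$ would require passing from line 3 to line 8, which Lemma~\ref{lem:lOne} forbids until $p_i$ performs W2. So each other process contributes at most one lost competition, giving the bound of at most $n-1$ in total.

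The second step, which I expect to be the main (though mild) obstacle, is to show that $p_i$ actually enters its critical section, i.e.\ it is not stuck forever in the \textbf{repeat} loop or in DLF.\lock. For the loop: TURN stays $i$ (it only changes at W3, which requires FLAG[TURN] $=0$ read in R4, impossible while FLAG[$i$] $=1$, and no process is mid-\unlock\ at $t$), so $p_i$ reads $tmp=i$ at R1 and exits. For DLF.\lock: the set of processes that can ever be in DLF.\lock\ from then on is finite and each enters at most once more (by the first step), so after finitely many critical sections only $p_i$ remains in DLF.\lock; deadlock-freedom of DLF (or Lemma~\ref{lem:dlf}) then forces some process—necessarily $p_i$—to enter. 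The care needed here is to make sure that processes leaving the critical section complete their \unlock\ (DLF.\unlock\ terminates, by the standing assumption that reads/writes terminate and \unlock\ is wait-free) so that deadlock-freedom can be reapplied repeatedly.
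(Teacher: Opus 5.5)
\paragraph{Verdict.} Your argument is essentially the paper's, and its load-bearing steps are correct, but one intermediate sentence is false and should be dropped.

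\paragraph{Comparison with the paper.} Lemma~\ref{lem:lOne} blocks every \lock\ invocation by a $p_j\neq p_i$ that starts after $t$. So only the invocation each $p_j$ has pending at time $t$ can bypass $p_i$, which gives at most one loss per other process. Your explicit argument that $p_i$ actually gets through DLF.\lock\ fills in a step the paper leaves implicit. It reads best as a contradiction: if $p_i$ never entered, DLF's deadlock-freedom would force infinitely many entries, but at most $n{-}1$ are available to the others. Saying ``only $p_i$ remains in DLF.\lock'' is not literally right, since other processes may sit in DLF.\lock\ without ever winning.

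\paragraph{The false sentence.} It is not true that every process that bypasses $p_i$ after $t$ must already be past its \textbf{repeat} loop at time $t$. A counterexample for $n\geq 3$:
\begin{itemize}
\item Let TURN $=k$ with $p_k$ in its critical section.
\item Let $p_j$ ($j\neq i,k$) have done R1, reading $tmp=k$, but not yet R2.
\item Let $p_i$, with $i=(k{+}1)\bmod n$, have completed W1.
\item Now $p_k$ unlocks: W2 sets FLAG[$k$] to 0, R4 reads 0, W3 sets TURN to $i$, and $p_k$ reaches line 15. At that moment $t$ we have TURN $\equiv_t i$ and FLAG[$i$]$\,=1$.
\item Afterwards $p_j$ performs R2, reads FLAG[$k$]$\,=0$, leaves the loop after $t$, and may beat $p_i$ in DLF.\lock.
\end{itemize}
So Lemma~\ref{lem:lOne} yields only that the winning invocation of $p_j$ began before $t$ (indeed, did its last R1 before $t$). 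This is exactly the condition the paper's proof uses: ``only if $p_j$ invoked its \lock\ before time $t$''.

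Your bound is unaffected. Your first step relies only on the fact that a fresh invocation started after $t$ cannot reach line 8 before $p_i$'s W2.
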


\begin{proof}
Since $p_i$ is running its \lock\ at time $t$ and TURN remains $i$, $p_i$ will eventually invoke (or has already invoked) DLF.\lock.
Because of Lemma \ref{lem:lOne}, a process $j\neq i$ can enter its critical section after time $t$ and before $p_i$ does only if $p_j$ invoked its \lock\ before time $t$. No $p_j$ can do this twice. Consequently, $p_i$ can lose at most one competition with each of the other processes before entering its critical section.
\end{proof}

\begin{lemma}{lTwo}
If $p_i$ runs {\lock} at time $t$,  then afterward the value of TURN will be incremented, and this happens after $p_i$ loses at most $n{-}1$ competitions.
\end{lemma}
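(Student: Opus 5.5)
Let $k$ be the value of TURN at time $t$ (or, if some process is in lines 11--14 at time $t$, at the moment $t_0 \geq t$ when that process reaches line 15). The proof splits on whether TURN already differs from $k$ before certain events occur. If TURN is incremented before $p_i$ loses any competition, we are done. Otherwise let $t_1$ be the first time after $t_0$ at which TURN is stably $k$ and no process is in lines 11--14; such a time exists because $n>1$ and mutual exclusion for DLF allows at most one process in that segment.

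I distinguish two cases according to the value of FLAG[$k$].

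\emph{Case FLAG[$k$] $=0$.} The next process to run \unlock\ reads TURN $=k$ at R3 and FLAG[$k$]$=0$ at R4, and so increments TURN. That \unlock\ is run by the first process entering the critical section after $t_1$; such a process exists by Lemma~\ref{lem:dlf}. The only exception is $p_k$ setting FLAG[$k$] to 1 in between, which moves us to the second case. In this case at most one competition is lost before the increment.

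\emph{Case FLAG[$k$] $=1$ stably.} Here I apply Corollary~\ref{cor:cOne} with $k$ in place of $i$. Process $p_k$ enters its critical section after every other process has overtaken it at most once. Lemma~\ref{lem:lOne} also guarantees that no process other than $p_k$ passes the repeat loop anew. Hence, before $p_k$ enters, $p_i$ can lose only to processes that were already past the loop at $t_1$, which gives at most one loss per other process. Then $p_k$ executes W2, reads TURN $=k$ at R3, reads FLAG[$k$]$=0$ at R4, and increments TURN at W3.

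Between $t$ and $t_1$ at most one process (the one in lines 11--14) can complete, contributing at most one loss. Combined with the two cases above, this gives the count: each process $p_j \neq i$ overtakes $p_i$ at most once before TURN is incremented. A second passage by $p_j$ would require it to pass the repeat loop while TURN $=k$ and FLAG[$k$]$=1$, which Lemma~\ref{lem:lOne} forbids. The one exception is $j=k$, and in that case the increment is triggered by $p_k$'s own \unlock.

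The main obstacle is the bookkeeping at the boundary. I must make sure that a competitor's \unlock\ already underway at $t$ is not double-counted together with its later entries. I must also handle the case where $p_k$ sets FLAG[$k$] to 1 just after another process reads it at R4; this is settled by atomicity, since that reader then increments TURN anyway. I would count losses per process $p_j$, arguing that each $p_j$ wins against $p_i$ at most once before the increment, which yields the bound $n-1$.
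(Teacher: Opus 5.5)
Your mechanism is the paper's. Lemma~\ref{lem:dlf} yields a winner after $t$. An \unlock\ that reads FLAG[TURN] $=0$ increments TURN. Otherwise TURN is stably $k$ with FLAG[$k$] $=1$: Lemma~\ref{lem:lOne} keeps every process other than $p_k$ from re-passing the \textbf{repeat} loop, Corollary~\ref{cor:cOne} makes $p_k$ win, and $p_k$'s own \unlock\ increments TURN. Counting per process is also the right way to reach $n{-}1$.

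\textbf{The gap} is the boundary step. You flag it as the main obstacle but leave it unresolved, and as written it is wrong. The process $p_m$ that is in lines 11--14 at time $t$ completed its \lock\ before $t$, so it costs $p_i$ no loss after $t$. Charging it ``at most one loss'' is not harmless slack; it breaks the bound. Here is an execution:
\begin{itemize}
\item the W3 of $p_m$ took effect before $t$ and set TURN to $m$ (it had read TURN $=m{-}1$ and FLAG[$m{-}1$] $=0$), so $k=m$ and FLAG[$k$] $=0$ at $t_1$;
\item the $n{-}2$ processes other than $p_i,p_m$ pass the loop;
\item $p_m$ re-invokes \lock, executes W1, passes the loop and pauses;
\item those $n{-}2$ processes win one by one, each reading FLAG[$m$] $=1$ at R4;
\item $p_m$ wins and increments TURN, with $p_i$ delayed throughout.
\end{itemize}
The true number of losses after $t$ is $n{-}1$. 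Your boundary charge plus one per process gives $n$, because $p_m=p_k$ is charged twice. That is exactly the exception $j=k$ which your per-process argument lets through. Counting only \lock\ completions after $t$ removes the boundary term, and then $t_0$ plays no role.

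Two smaller repairs are also needed.
\begin{itemize}
\item FLAG[$k$] $=1$ at $t_1$ does not imply that $p_k$ is running its \lock. It may be in its critical section, having entered before $t$, so Corollary~\ref{cor:cOne} does not apply verbatim. This is harmless, since its next \unlock\ increments TURN.
\item If $p_k$ raises its flag before the first winner $p_j$ executes R4, your second case must restart at the moment $p_j$ reaches line 15, not at $t_1$. You then need $j\neq k$ (it read FLAG[$k$] $=1$ right after its own W2), so $p_j$ cannot win again.
\end{itemize}

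The paper's proof is built around exactly that moment. It takes $p_j$ to be the first process finishing \lock\ after $t$ and splits only on what $p_j$'s R4 returns. If it returns $1$, then when $p_j$ reaches line 15, TURN is stably $k$, FLAG[$k$] $=1$, and $p_k$ is necessarily inside its \lock. Lemma~\ref{lem:lOne} and Corollary~\ref{cor:cOne} then apply directly, and winners before $t$ never enter the count. Your case split on FLAG[$k$] at $t_1$ is unnecessary: the FLAG[$k$] $=0$ case, with its transition, already is the whole argument.
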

\begin{proof}
  Let $p_j$ be the first process that finishes $\lock$ after time $t$ (such a process exists because of Lemma~\ref{lem:dlf}). Right after doing this, $p_j$ will enter its critical section and subsequently unlock.
If within this \unlock\ $p_j$ finds FLAG[TURN] = 0, then TURN is increased. Otherwise, TURN $\neq j$ and, by using Lemma \ref{lem:lOne} with the time in which $p_j$ reaches line 15, $p_j$ can not win again, and no other process can win twice until $p_{\rm TURN}$ wins the competition.
When $p_{\rm TURN}$ wins, which is guaranteed to happen by Corollary~\ref{cor:cOne}, it increases TURN in its {\unlock} phase.
Hence $p_i$ loses at most $n{-}1$ competitions before the value of TURN is incremented.
\end{proof}

\begin{theorem}{bbp}{\bf (Bounded bypass with atomic registers)}
When a process completes W1, then it enters the critical section after losing at most $n(n{-}1){-}1$ competitions.
\end{theorem}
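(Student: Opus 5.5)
The plan is to track the value of TURN after $p_i$ completes W1 at time $t$. From $t$ on, FLAG[$i$]$=1$ until $p_i$ executes W2, which happens only after $p_i$ leaves its critical section. TURN only ever moves cyclically by $+1$. The idea is to count how many times TURN can be incremented before it becomes stably $i$, charge each increment at most $n{-}1$ lost competitions via Lemma~\ref{lem:lTwo}, and then use Corollary~\ref{cor:cOne} for the final phase.

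\emph{First step.} While $p_i$ has not yet entered its critical section, any unlocking process that reads TURN $=i$ at R3 finds FLAG[$i$]$=1$ at R4 and does not increment TURN. Hence TURN can never move past $i$ after time $t$, unless it was already incremented ``past'' $i$ by an \unlock\ whose R4 read occurred before $t$. This can happen at most once: a single pending \unlock\ that read FLAG[$i$]$=0$ before $t$ and performs W3 after $t$. Consequently, starting from any value, TURN reaches $i$ after at most $n{-}1$ further increments. I will need to handle that boundary \unlock\ carefully. The clean way is to start the count at the first moment $t_0\ge t$ at which no process is in lines 11--14, using the fact that at most one process is in that segment. If the W3 in progress at time $t$ sets TURN to $i{+}1$, this could cost a full extra cycle; I expect the ``$-1$'' in the bound to arise precisely from the interplay between this initial situation and the counting in Lemma~\ref{lem:lTwo}.

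\emph{Second step: accounting.} By Lemma~\ref{lem:lTwo}, applied repeatedly (since $p_i$ is still running \lock\ after each increment), each increment of TURN happens after $p_i$ loses at most $n{-}1$ competitions. Once the last increment makes TURN $=i$ and the incrementing process finishes line 14, TURN $\equiv i$ with FLAG[$i$]$=1$. Corollary~\ref{cor:cOne} then gives at most $n{-}1$ further losses, one per other process. A naive total is therefore $(n{-}1)(n{-}1)+(n{-}1)=n(n{-}1)$, and the theorem requires shaving off one. I would look for the saving at a phase boundary. The competition won by the process whose \unlock\ sets TURN to $i$ is already counted among the $n{-}1$ losses preceding that increment. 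That process began its \lock\ before TURN became $i$ and has now finished it, so it cannot be one of the processes bypassing $p_i$ in the final phase. Thus Corollary~\ref{cor:cOne} yields at most $n{-}2$ new losses there, giving $(n{-}1)^2+(n{-}2)=n(n{-}1)-1$.

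\emph{Main obstacle.} The delicate part is the exact off-by-one bookkeeping. This includes the \unlock\ in progress at time $t$, the possibility that TURN $=i$ already at $t$ but not stably, and making sure that losses straddling increments are not double counted or missed. I would treat separately the case where TURN$=i$ at time $t$, or where no increment is needed, since then Corollary~\ref{cor:cOne} applies almost immediately. In the general case I would argue that at most $n{-}1$ increments occur after $t$, each preceded by at most $n{-}1$ losses, with the last phase saving one as above.
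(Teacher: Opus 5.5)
Your proposal is correct and follows essentially the paper's own proof. You bound the number of increments of TURN by $n{-}1$, each preceded by at most $n{-}1$ losses by Lemma~\ref{lem:lTwo}, and then allow at most $n{-}2$ losses via Corollary~\ref{cor:cOne}, excluding the process that set TURN to $i$ (the paper invokes Lemma~\ref{lem:lOne} for exactly this exclusion), for a total of $(n{-}1)^2+(n{-}2)$. Your handling of the \unlock\ pending at time $t$ matches the paper's footnote: you restart the count once no process is in lines 11--14, and you should state explicitly that no competition can be won in between, since DLF treats that segment as a critical section. The ``$-1$'' indeed comes from that final phase, not from the initial situation as you first guessed.
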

\begin{proof}
Say that $p_i$ completes W1 at time $t$; the worst case is when TURN = $(i{+}1) \mod n$ at time $t$.\footnote{It could be considered worse if TURN = $i$ at time $t$, but another process $j$ in its {\unlock} sets TURN to $(i{+}1)\! \mod n$ in W3, say at time $t'>t$. In that case $p_j$ must have read FLAG[$i$] = 0 in R4, which must have occurred prior to time $t$. So no process has won the competition between $t$ and $t'$, and we can simply start our reasoning from $t'$.} By Lemma \ref{lem:lTwo}, $p_i$ may lose at most $n{-}1$ competitions before the value of TURN is incremented and so it  may lose at most $(n{-}1)^2$ competitions before TURN reaches the value $i$. In case $p_i$ completes its {\lock} before TURN reaches the value $i$ we are done, so assume that $p_i$ is still running {\lock} when TURN reaches the value $i$.
Right after this occurs, we reach a time $t'$ when TURN $\equiv_{t'} i$, \mbox{FLAG[$i$] = 1}, and $p_i$ is still running its \lock. After $t'$, by Corollary~\ref{cor:cOne}, $p_i$ may lose at most one competition with each of the other processes before entering its critical section. Moreover, by Lemma~\ref{lem:lOne}, after $t'$ it cannot lose another competition with the process that incremented the value of TURN to $i$. Thus, it loses at most $n{-}2$ competitions after time $t'$. This brings the total losses to at most $(n{-}1)^2 + (n{-}2) = n(n{-}1){-}1$.
\end{proof}

Note that the bound given in Theorem~\ref{thm:bbp} is tight, in the sense that it can be achieved in some execution of Algorithm~\ref{alg}. To see this, let TURN be initialized to 1 and let us prove that $p_0$ can be bypassed exactly $n(n{-}1){-}1$ times. Indeed:
\begin{enumerate}
  \leftmargini 1in
\item $p_0$ executes W1.
\item Processes from 2 to $n{-}1$ execute the \lock\ and assemble at the start of line 8 (they pass line 7 because FLAG[TURN]=0).
\item $p_1$ executes W1.
\item One by one, processes from 2 to $n{-}1$ execute DLF.\lock\ without contention, enter their critical sections and execute their \unlock, 
without incrementing TURN (since FLAG[TURN]=1);
this step yields $n{-}2$ bypasses.
\item $p_1$ executes the remainder of its \lock, its critical section and its \unlock\ (where TURN is incremented, since FLAG[TURN]=0);
this yields another bypass.
\item We repeat steps (2)--(5) above $n{-}3$ times, with $p_{\rm TURN}$ (for TURN = 2, 3, $\ldots, n{-}2$) in the role of $p_1$ and processes $\{1,...,n{-}1\} \setminus \{p_{\rm TURN}\}$ in the roles of $p_2,\ldots,p_{n-1}$.
\item We repeat steps (2)--(4) above with $p_{n-1}$ in the role of $p_1$ and processes $\{1,...,n{-}2\}$ in the roles of $p_2,\ldots,p_{n-1}$.
\item $p_{n-1}$ executes the remainder of its \lock, its critical section and line 11 of its \unlock. 
\item Now, processes from 1 to $n{-}2$ execute the \lock\ once more and assemble at the start of line 8
(again, they pass line 7 because FLAG(TURN)=0).
\item $p_{n-1}$ finishes its \unlock, by putting TURN to 0.
\item One by one, processes from 1 to $n{-}2$ execute DLF.\lock\ without contention, their critical sections, and the \unlock\ (here, TURN is not incremented, since FLAG[TURN]=1).
\item $p_0$ finally executes the remainder of its \lock\ and enters its critical section.
\end{enumerate}
Steps (1) to (8) bring $(n{-}1)(n{-}1)$ bypasses, whereas steps (9) to (11) bring another $n{-}2$ bypasses. So, the total number of bypasses is $(n{-}1)(n{-}1) + (n{-}2) = n(n{-}1){-}1$.

\subsection{The Algorithm with Safe and Regular Registers}\label{sec:algSafeRegs}
\mbox{}

\bigskip
\noindent
Let us now consider the behavior of the algorithm in the weakest possible memory model, namely the one with safe registers.
First of all, notice that
the elements of FLAG are SWMR safe registers, since FLAG[$i$] can only be written by $p_i$ through W1 and W2. Moreover, the write W3 occurs in a portion of the code (lines 11--14) that, from the point of view of DLF, is treated equivalent to a critical section; hence, there are no overlapping writes.

\begin{proposition}{No Bounded Bypass} 
For Algorithm \ref{alg}, no bound on the number of bypasses can be guaranteed with safe registers. 
\end{proposition}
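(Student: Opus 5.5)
We will prove the statement by building, for an arbitrary candidate bound $B$, one execution of Algorithm~\ref{alg} with safe registers in which some process $p_i$ has already completed W1 (the first instruction of its \lock) but then loses more than $B$ competitions before entering its critical section. Since $B$ is arbitrary, no $f(n)$ can satisfy \df{Bounded bypass}.

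The idea is to exploit the only thing that weakens safe registers relative to atomic ones: a read overlapping a write may return any value of the domain. We stretch a single write over a long time interval. Because writes to FLAG[$i$] and to TURN never overlap each other, the register value itself is always well defined. The reads performed by other processes during that interval, however, can return garbage. We need garbage values that keep letting another process through the \textbf{repeat} loop, while keeping $p_i$ stuck there.

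\emph{Construction.} Take $n=2$ (the argument extends to any $n\ge 2$ by letting the other processes stay idle), with TURN initially $1$.
\begin{enumerate}
\item $p_0$ executes and completes W1, so FLAG[0] $=1$ from now on.
\item $p_0$ performs R1 and reads TURN $=1$. It then performs R2 on FLAG[1]. We let $p_1$ start a W1 on FLAG[1] that stays in progress for a very long time, so $p_0$'s read may return $1$. Then $p_0$ loops. Alternatively, $p_0$ can simply be slow, but only reads that are still blocked count, so we prefer the overlap trick.
\item We cannot leave $p_1$ frozen in W1, since $p_1$ must itself cycle through \lock/\unlock\ and bypass $p_0$. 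So instead we freeze a write to TURN, or to FLAG[1] issued in W2.
\end{enumerate}

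The cleanest scenario seems to be this. $p_1$ enters the critical section and, in its \unlock, reads FLAG[0] $=1$ with TURN $=0$. Actually we let TURN $=0$ be the value that should privilege $p_0$. The bad case arises when $p_1$'s W3 setting TURN from $1$ to $0$ is in progress for a long time. During that write, every R1 by either process may return either value. $p_0$'s R1 keeps returning $1$ and its R2 keeps seeing FLAG[1] $=1$, so $p_0$ stays in the loop.

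The obstacle is that $p_1$ itself is stuck in W3 and so cannot relock. The remedy is to use $n\ge 3$: $p_1$ stalls inside W3 while $p_2$ repeatedly runs its \lock. Each of $p_2$'s R1 reads overlaps the stalled W3, so it may return $2$, and $p_2$ exits the loop via $tmp=i$. DLF.\lock\ is then uncontended, since $p_1$ already executed DLF.\unlock? No: $p_1$ is still before line 15, so DLF still considers $p_1$ in its critical section. This is the key difficulty, because W3 happens while DLF excludes others.

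Therefore the stalled write must lie outside the DLF-protected region. The right choice is a W2 or W1 on a FLAG register. Let $p_1$ perform W1 (FLAG[1] $\leftarrow 1$) slowly, with TURN $=1$. Meanwhile $p_0$'s R2 reads of FLAG[1] return $1$, so $p_0$ is blocked. At the same time $p_2$'s R2 reads of FLAG[1] return $0$, and $p_2$ passes. $p_2$ then does DLF.\lock, enters its critical section, and in its \unlock\ the R4 read of FLAG[1] returns $1$, so TURN is not incremented. $p_2$ repeats this arbitrarily often. Each pass is a competition lost by $p_0$, and $p_0$ started its \lock\ and completed W1 before all of them.

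The step to check carefully is that every chosen read value is allowed: each such read overlaps $p_1$'s unfinished W1, so safe semantics permit any value. It also has to be checked that $p_0$ is still genuinely running its \lock. Finally, with $p_1$ never finishing W1 within the window, the number of bypasses of $p_0$ exceeds any $B$. This is the main obstacle; the rest is routine bookkeeping.
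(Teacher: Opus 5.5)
Your final construction is exactly the paper's argument up to a renaming of the processes, and it is correct: stall $p_1$'s W1 while TURN $=1$, so that the victim $p_0$ (which has already completed its W1) keeps reading FLAG[1] $=1$ in R2, while a third process $p_2$ reads $0$ in R2, passes through DLF, and reads $1$ in R4 so that TURN never advances. You should, however, delete the opening claim that $n=2$ suffices, because the paper notes (and confirms by model checking) that with two processes at most one bypass is possible, so the third process is genuinely needed, as your own later reasoning discovers.
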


\begin{proof}
Consider a scenario with three processes $p_0$, $p_1$ and $p_2$. Assume that TURN = 0 and that $p_0$ is staying within its W1 for a very long time. During this time, assume that $p_1$ always reads FLAG[0] = 1 in its R2 (and so never reaches its line 8), whereas $p_2$ always reads FLAG[0] = 0 in its R2, proceeds to line 8, enters the critical section and reads FLAG[0] = 1 in its R4 (and so does not increase TURN). This can be repeated an unbounded number of times, thereby breaking any bound on the number of losses $p_1$ may endure before entering its critical section.
\end{proof}

\begin{remark}\label{rem:regbb}
The scenario above applies to regular registers as well, so also for regular registers Algorithm \ref{alg} does not satisfy bounded bypass.
\end{remark}

We will now show, when $n{>}2$, that with safe registers Algorithm~\ref{alg}
satisfies an intermittent bounded bypass of $n^2{-}2$ within $n$ interrupting assignments. For this algorithm, the set $S$ of interrupting assignments is made up by:
\begin{itemize}
\item the W1 operation of the process $p_i$ considered, and 
\item those W1 operations of any other process $p_j$ involved in the protocol for which there exists a moment during its execution when 
\begin{enumerate}
\item[(a)] the W1 operation of the process $p_i$ has been completed, 
\item[(b)] $\lock(i)$ has not yet been completed, and 
\item[(c)] the register TURN is not being written and its value is $j$.
\end{enumerate}
\end{itemize}
\noindent
By comparison, the previous section showed that, with atomic registers, Algorithm \ref{alg} 
satisfies an intermittent bounded bypass of $n(n{-}1){-}1$  within $1$ interrupting assignment, namely the W1 executed by the process considered. That assignment is the one referred to in the last part of \df{Bounded bypass}, which we no longer need in \df{intermittent}.

First note that Lemma \ref{lem:dlf} still holds for safe registers.
The reason is that its proof nowhere depends on the reading of a shared register that could be written by another process at the same time.

We now write FLAG$[i] \doteq_t 1$ if, at time $t$, FLAG$[i] = 1$ and $p_i$ is not executing W1 or W2. If in Lemma \ref{lem:lOne} and Corollary \ref{cor:cOne} we replace FLAG$[i] = 1$ with FLAG$[i] \doteq_t 1$, then these results hold also for safe registers, for the same reason as above.

\begin{definition}{stable time}
Given any time $t$, let $\widehat t$ be the first time $\geq t$ in which 
every process that has finished its \lock\ at time $t$ has reached line 15.
\end{definition}

\begin{lemma}{stable}
For every $t$, no process can finish its \lock\ between $t$ and $\widehat t$.
\end{lemma}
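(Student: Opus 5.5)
The plan is to use mutual exclusion of DLF and argue by contradiction. Fix $t$. If no process has finished its \lock\ at time $t$ without yet reaching line 15, then $\widehat t = t$ and there is nothing to prove. Otherwise there is a process $p_j$ that at time $t$ has completed its \lock\ (so it has completed DLF.\lock($j$)) but has not yet reached line 15. The first step is to check that such a $p_j$ is unique. From the perspective of DLF, every process between the end of DLF.\lock\ and the start of DLF.\unlock\ is in its critical section. These are exactly the processes in the window from the end of \lock\ to the start of line 15. Since DLF satisfies mutual exclusion, at most one process can be in this window at any time. This is the same observation the paper already used to argue that lines 11--14 are executed in mutual exclusion.

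The second step is the main argument. Suppose some process $p_k$ finishes its \lock\ at a time $s$ with $t < s < \widehat t$. By definition of $\widehat t$, the process $p_j$ has not yet reached line 15 at time $s$. So $p_j$ is still in the window, which by DLF is a critical section. If $k \neq j$, then at time $s$ process $p_k$ has completed DLF.\lock($k$) and is therefore also in its DLF critical section, contradicting mutual exclusion of DLF. If $k = j$, then $p_j$ would have to invoke \lock\ again, and so first pass line 15, all within $(t, s)$. That would put the moment $p_j$ reaches line 15 before $s < \widehat t$, contradicting the minimality of $\widehat t$.

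The only point needing care is the boundary convention. Finishing the \lock\ exactly at $t$ is allowed, and this already-finished process is the one defining $\widehat t$. Reaching line 15 exactly at $\widehat t$ means leaving the DLF critical section at that moment. A process finishing its \lock\ exactly at $\widehat t$ is excluded only in the open sense, so I will interpret ``between'' as the open interval $(t,\widehat t)$, matching the definition of $\widehat t$ as a first time.

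I expect no real obstacle. The argument uses no shared-register reads at all, only the mutual exclusion of DLF. It is therefore independent of whether registers are safe, regular or atomic, which is why it holds in the setting of \Sec{algSafeRegs}. The one step to state explicitly is the identification of ``finished \lock\ but not yet at line 15'' with ``inside DLF's critical section''.
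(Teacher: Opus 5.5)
Your proof is correct and takes the same approach as the paper. If $\widehat t \neq t$, mutual exclusion of DLF (with the stretch from the end of \lock\ to line 15 treated as DLF's critical section) gives a unique process in that window, so no other process can finish its \lock\ before $\widehat t$; your explicit handling of the case $k=j$ via minimality of $\widehat t$ spells out a detail the paper leaves implicit.
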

\begin{proof}
If at time $t$ all processes that  finished their \lock\ have also already reached line 15, then $\widehat t = t$. Otherwise, because of mutual exclusion of DLF.\lock, there is one process that at time $t$ finished its \lock\ but has not yet reached line 15; so, still because of mutual exclusion, no other process can finish its \lock\ between time $t$ and $\widehat t$.
\end{proof}

\begin{lemma}{lTwo safe}
If at time $t$ process $p_i$ is running {\lock} and has already completed its W1 operation, then afterward the value of TURN will be incremented, and this happens after $p_i$ loses at most $n$ competitions, excluding the competitions lost during the execution of an interrupting W1 by $p_{\rm TURN}$.
\end{lemma}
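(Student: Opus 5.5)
We adapt the proof of Lemma~\ref{lem:lTwo} to safe registers, using the stabilisation point $\widehat t$ of \df{stable time} in place of the time when $p_j$ reaches line~15. First, by the safe-register version of Lemma~\ref{lem:dlf}, some process finishes its \lock\ after time $t$. Let $p_j$ be the first one, finishing at time $t_1$. Lemma~\ref{lem:stable} gives that no process finishes its \lock\ in $[t,\widehat t\,]$ other than, possibly, the process already holding DLF at time $t$. So we may restrict attention to competitions lost after $\widehat{t_1}$, when $p_j$ reaches line~15 and no process is in lines 11--14. We must be careful to count the possible extra win of a process that was already past \lock\ at time $t$. This is where the bound becomes $n$ instead of $n{-}1$: one extra loss is charged to that in-flight process.

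After $\widehat{t_1}$ we have TURN $\equiv k$ for some $k$, and we split on the state of $p_k$.
\begin{itemize}
\item[(1)] If within its \unlock\ $p_j$ read FLAG[$k$] $=0$ in R4, with R3 returning the true value $k$, then TURN is incremented and we are done. R3 is reliable because W3 is only performed in the DLF-protected region, so it never overlaps R3.
\item[(2)] Otherwise, consider any moment $t_2\geq\widehat{t_1}$ with FLAG$[k]\doteq_{t_2} 1$. The safe version of Lemma~\ref{lem:lOne} shows that no process other than $p_k$ can pass from line~3 to line~8 until $p_k$ executes W2. Hence each process can win at most once, as in Lemma~\ref{lem:lTwo}. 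By the safe version of Corollary~\ref{cor:cOne}, $p_k$ eventually wins, and in its \unlock\ it reads FLAG[$k$] $=0$ non-overlapped (it is its own register) and increments TURN.
\item[(3)] The remaining possibility is that FLAG[$k$] is not stably 1 because $p_k$ is executing W1, or W2. A W2 by $p_k$ only happens inside its \unlock, which leads to the increment anyway. If $p_k$ is inside W1, then at that moment (a) $p_i$ has completed W1, (b) $\lock(i)$ is unfinished, and (c) TURN is not being written and equals $k$. So this W1 is an interrupting assignment by $p_{\rm TURN}$, and losses during it are excluded by the statement. Once that W1 terminates, FLAG$[k]\doteq 1$ holds and we are back in case (2).
\end{itemize}

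It remains to count. In case (2), competitions lost outside $p_k$'s interrupting W1 come only from processes that passed line~7 before the stable moment. Each such process wins at most once, and together with $p_k$'s own win this gives at most $n{-}1$ wins. Adding the possible win of the process already in DLF at time $t$ gives at most $n$.

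The main obstacle is making this counting precise. Processes may pass R2 during $p_k$'s W1 (reading garbage) and then win later, outside the W1 interval. We need to show that such late winners are still charged at most once per process. Intervals where FLAG[$k$] $=0$ before $p_k$ starts W1 let others pass freely too, but they can each win at most once before TURN stays pinned. The first thing to try is to fix the first moment after $\widehat{t_1}$ at which FLAG$[k]\doteq 1$ holds, and argue as in Corollary~\ref{cor:cOne} that every process winning afterward (other than $p_k$) invoked \lock\ before that moment. Wins strictly between $\widehat{t_1}$ and that moment either occur during the interrupting W1, and are excluded, or occur while FLAG[$k$] $=0$ stably. In the latter case the winner's \unlock\ reads FLAG[$k$] $=0$ and increments TURN, ending the count.
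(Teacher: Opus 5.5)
Your skeleton is the paper's: take the first winner $p_j$ after $t$, look at the outcome of its R4, treat the W1 of $p_{\rm TURN}$ as the excluded interrupting interval, and then apply Corollary~\ref{cor:cOne}. The problem is the accounting that is supposed to give $n$.

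\textbf{The extra loss is charged to the wrong process.} You restrict attention to losses after $\widehat{t_1}$. This silently drops $p_j$'s own win at $t_1<\widehat{t_1}$, which is a genuine loss of $p_i$ after $t$. You then charge the extra unit to the process already past its \lock\ (holding DLF) at time $t$. That process won before $t$, so it costs $p_i$ nothing in this lemma. (Lemma~\ref{lem:stable} also needs no exception for it: that process simply does not finish its \lock\ in $[t,\widehat t\,]$.)

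The correct split, as in the paper, is:
\begin{itemize}
\item exactly one loss, to $p_j$, in $(t,\widehat{t_1}]$; no one else can finish \lock\ there, since $p_j$ is the first and holds DLF until line~15;
\item at most $n{-}1$ further losses after stabilisation.
\end{itemize}
The safe bound exceeds the atomic $n{-}1$ because $p_j$ can genuinely win twice. Its R4 overlaps $p_k$'s W1 and returns $1$. It then re-invokes \lock, passes R2 by reading a garbage $0$ from the same still-running W1, and wins again after that W1 has ended, where the loss is not excluded. This is exactly what the tight run after Theorem~\ref{thm:bib} exploits. Your total comes out as $n$ only because a real loss is omitted and a fictitious one is added.

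\textbf{Missing re-stabilisation after the interrupting W1.} When $p_k$'s W1 ends, at $u'_k$ say, TURN need not be stably $k$: a process that won during that W1 may still be in lines 11--14. So you cannot apply Lemma~\ref{lem:lOne} and Corollary~\ref{cor:cOne} at ``the first moment at which FLAG$[k]\doteq 1$'', nor go ``back to case (2)'' there. You must move on to $\widehat{u'_k}$ (the paper's $\widehat{t'}_{\rm TURN}$). By Lemma~\ref{lem:stable}, nobody finishes \lock\ in between. At that point either TURN has already been incremented, and you are done, or TURN $\equiv k$, FLAG$[k]\doteq 1$ and $p_k$ is still in its \lock, so the corollary applies.

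\textbf{A vacuous subcase.} Your subcase of wins ``while FLAG$[k]=0$ stably'' cannot occur. $p_j$'s R4 returned a nonzero value while $p_j$ was in the DLF-protected region, so $p_k$ could not be in W2. Hence $p_k$ had started its W1 before that R4 ended, and it cannot reach W2 before winning.
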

\begin{proof}
  In case TURN gets incremented after time $t$ and before any process finishes {\lock}, there is nothing to prove. Thus, assume that this is not the case.
  Let $p_j$ be the first process that finishes $\lock$ after time $t$ (such a process exists because of Lemma~\ref{lem:dlf}). Right after doing this, $p_j$ will enter its critical section and subsequently unlock. Let $t'$ be the time when $p_j$ completes the execution of R4 in its \unlock.
If within this \unlock\ $p_j$ finds FLAG[TURN] = 0, then TURN is incremented and the lemma is proved. 

Otherwise, TURN $\neq j$ and at time $t'$ process $p_{\rm TURN}$ either is executing its W1 or has already done so. Let $u_{\rm TURN}$ and $u'_{\rm TURN}$ be the start and termination times of this execution of W1 by $p_{\rm TURN}$, and let $t_{\rm TURN} = \max\{t',u_{\rm TURN}\}$ and $t'_{\rm TURN} = \max\{t',u'_{\rm TURN}\}$.
Between $t$ and $t_{\rm TURN}$ $p_i$ loses at most $1$ competition, namely with $p_j$, in case $j \neq i$; indeed, the only way for $p_j$ to find FLAG[TURN] $\neq 0$ is to complete R4 after $u_{\rm TURN}$, so $t_{\rm TURN}=t'$.
After \plat{$\widehat{t'}_{\rm TURN}$}, by Corollary \ref{cor:cOne} $p_{\rm TURN}$ will enter the critical section after losing at most one competition with each of the other processes. When this happens, $p_{\rm TURN}$ increases TURN in its {\unlock} phase. Hence $p_i$ loses at most $n{-}1$ competitions after $t'_{\rm TURN}$ and before the value of TURN is incremented.
Thus, the total number of losses excluding the ones that happen between $t_{\rm TURN}$ and $t'_{\rm TURN}$ is $n$.

In the special case that $j=i$, process $p_i$ suffers no losses at all before TURN gets incremented, and we are done. Otherwise, at time $t'$ the W1 operation of the process $p_i$ has been completed, but $\lock(i)$ has not yet been completed. Moreover, at time $t'$ the value of TURN is not being written. If at that moment the W1 of $p_{\rm TURN}$ is still going on, then that W1 counts as an interrupting assignment.
Moreover, $t'<u'_{\rm TURN}$, and the interval between $t_{\rm TURN}$ and $t'_{\rm TURN}$ is fully contained in the execution of that W1. Otherwise $t'\geq u'_{\rm TURN}$ and the interval between $t_{\rm TURN}$ and $t'_{\rm TURN}$ is empty.
\end{proof}

\begin{lemma}{bounded on interrupts}
\label{bounded on interrupts}
During an execution of $\lock(i)$, at most $n$ interrupting assignments can occur.
\end{lemma}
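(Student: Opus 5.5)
The plan is to bound the interrupting assignments occurring during one execution of $\lock(i)$ by counting, for each process $p_j \neq i$, how many of its W1 operations can qualify. The W1 of $p_i$ itself contributes one. For $j \neq i$, a W1 of $p_j$ is interrupting only if, at some moment during it, conditions (a)--(c) hold: $p_i$ has completed its own W1, $\lock(i)$ is still running, and TURN is not being written and equals $j$. We aim to show that each of the $n{-}1$ other processes contributes at most one such W1 during the execution of $\lock(i)$. This gives the total bound $1+(n{-}1)=n$.

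First, observe that TURN only changes through W3, which is executed in mutual exclusion and only increments TURN modulo $n$. The value $j$ can therefore be taken by TURN during several disjoint \emph{epochs}, separated by full cycles of TURN through all process IDs. The key step is to show that, once $p_i$ has completed W1 and while $\lock(i)$ is running, TURN cannot pass through the value $i$. The argument uses the guard R4, which reads FLAG[$i$] when TURN $=i$. Since $p_i$ is not writing FLAG[$i$] after W1 completes, and FLAG[$i$] $=1$, the increment from $i$ is impossible. So TURN can move from $(i{+}1)$ towards $i$ at most once, and each value $j \neq i$ is taken during at most one contiguous epoch in the relevant interval. Corollary~\ref{cor:cOne}, in its safe-register form with FLAG$[i]\doteq_t 1$, confirms that once TURN reaches $i$ it stays there until $p_i$ unlocks.

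Second, within a single epoch where TURN $=j$ stably, we must show that $p_j$ performs at most one W1 that overlaps this epoch at a point where (a)--(c) hold. Suppose $p_j$ performs a W1 during the epoch and later completes it. Then FLAG$[j]\doteq 1$ holds and TURN $\equiv j$, so by Lemma~\ref{lem:lOne} TURN cannot be incremented before $p_j$ executes W2. After that, a later $\lock(j)$ would start with another W1. The question is whether $p_j$ can cycle through lock, unlock and a new W1 while TURN stays $j$. In its own \unlock, $p_j$ reads FLAG[$j$] $=0$ in R4, since it has just done W2 and nobody else writes that register, and therefore increments TURN. This ends the epoch. Hence at most one W1 of $p_j$ qualifies per epoch.

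The main obstacle is the boundary handling with safe registers: the interval during which TURN itself is being written, and the case where $p_j$'s W1 started before $p_i$ completed its W1 or straddles the epoch boundary. Condition (c) excludes moments when TURN is being written, and a W1 straddling an epoch start still counts only once. We also need to exclude a second qualifying W1 of $p_j$ inside one epoch, which arises when $p_j$'s W1 overlaps an unlocking process reading FLAG[$j$] $=0$. I would handle this by taking the epoch to start when the TURN write finishes, and then checking that any W1 of $p_j$ qualifying in the epoch is the last one before $p_j$ completes an \unlock\ that increments TURN. Uniqueness per process then follows.
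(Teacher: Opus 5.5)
Your proposal is correct and takes essentially the paper's route. Once $p_i$ has completed W1, TURN cannot be advanced past $i$ while $\lock(i)$ runs, so each value $j\neq i$ is held by TURN during at most one stretch; the $\unlock(j)$ following an interrupting W1 of $p_j$ (whose R4 reads FLAG[$j$] = 0) ends that stretch, which gives at most one interrupting W1 per $p_j \neq p_i$, plus $p_i$'s own.

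Three small precision points. First, state the no-passing claim as the paper does, for TURN reaching $i$ via a W3 that completes after $p_i$'s W1: a W3 whose R4 read FLAG[$i$] before or during that W1 can still move TURN off $i$ once, though this does not change the count. Second, the fact that TURN then stays at $i$ is Lemma~\ref{lem:lOne}, not Corollary~\ref{cor:cOne}. Third, your worry about a second qualifying W1 of $p_j$ within one stretch is already excluded by the unlock argument.
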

\begin{proof}
  Recall that for this protocol we defined an interrupting assignment as the W1 operation of $p_i$, and those W1 operations of any process $p_j \neq p_i$ for which there exists a moment during its execution when (a) the W1 operation of the process $p_i$ has been completed, (b) $\lock(i)$ has not yet been completed, and (c) the register TURN is not being written and its value is~$j$. Let $t$ be the time at which $p_{i}$ completes its W1 instruction.  We have to show that, after $t$ and before $\lock(i)$ terminates, at most $n{-}1$ interrupting assignments occur---including interrupting assignments that start before time $t$, or end after $\lock(i)$ terminates.

If, at time $t_i > t$, TURN ever reaches the value $i$, by some process completing W3, it will not be incremented further until $p_i$ has completed its \lock.  If $p_j \neq p_i$ executes an interrupting assignment, then there is a time $t_j > t$ when $p_j$ is executing its W1 operation, $\lock(i)$ has not yet been completed and 
  the register TURN is not being written and its value is~$j$. Hence in $\unlock(j)$, if it ever occurs, if the value of TURN still is $j$, it will be incremented. Thus, each  $p_j \neq p_i$ can execute at most one interrupting assignment.
\end{proof}

\begin{theorem}{bib}{\bf (Intermittent bounded bypass with safe registers)}
Algorithm~\ref{alg} satisfies an intermittent bounded bypass of $n^2{-}2$ within $n$ interrupting assignments.
\end{theorem}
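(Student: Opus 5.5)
The plan is to mirror the proof of Theorem~\ref{thm:bbp}, replacing its ingredients by their safe-register counterparts. These are Lemma~\ref{lem:lTwo safe} instead of Lemma~\ref{lem:lTwo}, and Lemma~\ref{lem:lOne} and Corollary~\ref{cor:cOne} with FLAG$[i]\doteq_t 1$. Lemma~\ref{lem:bounded on interrupts} supplies the bound $h(n)=n$ on interrupting assignments directly. So the work is to count losses outside interrupting assignments and reach $f(n)=n^2-2$.

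Let $p_i$ start its \lock. Its own W1 is interrupting, so losses before its completion, at time $t$, are ignored. The bad case is when TURN is $(i{+}1)\bmod n$ after $t$; as in the footnote to Theorem~\ref{thm:bbp}, an increment by a process that read FLAG$[i]=0$ before or during our W1 lets us start the count at that increment. Applying Lemma~\ref{lem:lTwo safe} repeatedly, each increment of TURN costs at most $n$ losses, not counting those during an interrupting W1 of $p_{\rm TURN}$; these are indeed in $S$, by conditions (a)--(c). Moving from $(i{+}1)\bmod n$ to $i$ takes $n{-}1$ increments, so at most $n(n{-}1)$ losses accrue before TURN becomes $i$. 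If $p_i$ finishes first, we are done.

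Otherwise, once the W3 setting TURN to $i$ ends and the writer reaches line 15, we have a time $t''$ with TURN $\equiv_{t''} i$. Since $p_i$ is between W1 and W2, FLAG$[i]\doteq_{t''}1$. The safe version of Corollary~\ref{cor:cOne} gives at most one loss to each other process. As in Theorem~\ref{thm:bbp}, the process that incremented TURN to $i$ cannot win again, by Lemma~\ref{lem:lOne}, leaving at most $n{-}2$ further losses. The interval between completion of R4/W3 and $t''$ adds no wins, by Lemma~\ref{lem:stable} and mutual exclusion of lines 11--14.

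The naive total is $n(n{-}1)+(n{-}2)=n^2-2$, which matches the claimed bound. I expect the main obstacle to be checking that the per-increment accounting does not double count. The ``$+1$'' loss in Lemma~\ref{lem:lTwo safe} (the winner $p_j$ that failed to increment) combined with the $n{-}1$ losses in the final phase must not exceed the stated count. The delicate point is that the process incrementing TURN to $i$ may coincide with a previously counted winner. I would verify at each step that the $n$ per increment already includes that incrementer. I would also verify that losses happening while TURN is being written, or before $\widehat{t'}$, are covered, using Lemma~\ref{lem:stable}. If the count overshoots, I would tighten the first increment: as in Theorem~\ref{thm:bbp}, its first competitor $p_j$ cannot be charged twice.
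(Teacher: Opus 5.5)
Your proposal is correct and follows essentially the same route as the paper's proof. It ignores losses during $p_i$'s own (interrupting) W1 and takes TURN $=(i{+}1)\bmod n$ as the worst case once no process is in lines 11--14 (you argue this via the footnote, the paper via $\widehat t$ and Lemma~\ref{lem:stable}); it charges at most $n$ non-interrupted losses to each of the $n{-}1$ increments via Lemma~\ref{lem:lTwo safe}, then at most $n{-}2$ more once TURN is stably $i$ via Corollary~\ref{cor:cOne} and Lemma~\ref{lem:lOne}, giving $n(n{-}1)+(n{-}2)=n^2-2$. Your double-counting worry is unfounded, so no tightening is needed: the naive sum already equals the bound, and the process that increments TURN to $i$ is counted once in the last batch of $n$ and then barred from winning again by Lemma~\ref{lem:lOne}, exactly as in the paper.
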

\begin{proof}
Say that $p_i$ invokes \lock$(i)$ at time $t_0$ and completes W1 at time $t > t_0$; by Lemma \ref{lem:stable}, no process will finish its {\lock} between $t$ and~$\widehat t$. The worst case is when TURN = $(i{+}1) \mod n$ at time $\widehat t$. By Lemma \ref{lem:lTwo safe}, $p_i$ may lose at most $n$ competitions within 1 interrupting assignment before the value of TURN is incremented, and so it may lose at most $(n{-}1)n$ competitions within $n{-}1$ interrupting assignments before TURN reaches the value $i$. In case $p_i$ completes its {\lock} before TURN reaches the value $i$ we are done, so assume that $p_i$ is still running {\lock} when TURN reaches the value $i$.
Right after this occurs, we reach a time $t'$ when TURN $\equiv_{t'} i$ and $p_i$ is still running its \lock.

Since $p_i$ has already completed W1 at time $t< t'$, FLAG[$i$] $\doteq_{t'} 1$. Thus, by Corollary~\ref{cor:cOne}, after $t'$ $p_i$ may lose at most one competition with each of the other processes. Moreover, by Lemma~\ref{lem:lOne}, after $t'$ it cannot lose another competition with the process that incremented the value of TURN to $i$. Thus, it loses at most $n{-}2$ competitions after time $t'$. This brings the total losses to at most $(n{-}1)n + (n{-}2) = n^2{-}2$, within $n$ interrupting assignments (including the execution of W1 between $t_0$ and $t$).
\end{proof}

Also in this case, the bound given in Theorem~\ref{thm:bib} is tight, at least when $n{>}2$. To see this, let TURN be initialized to 1 and let us prove that $p_0$ can be bypassed exactly $n^2{-}2$ times. Indeed:
\begin{enumerate}
\item $p_0$ invokes \lock\ and executes its W1.
\item Let $p$ be any process in $\{p_1,\ldots,p_{n-1}\}$ different from $p_{\rm TURN}$ and let it invoke the \lock, complete it (since it reads FLAG[TURN]=0) and execute its critical section.
\item $p_{\rm TURN}$ invokes \lock\ and starts W1.
\item $p$ executes \unlock\ without changing TURN (indeed, it may read 1 from FLAG[TURN]).
\item All processes in $\{p_1,\ldots,p_{n-1}\}$ different from $p_{\rm TURN}$ now invoke the \lock\ and assemble at line 8 (since they can all read 0 from FLAG[TURN]). 
\item $p_{\rm TURN}$ completes W1.
\item All processes in $\{p_1,\ldots,p_{n-1}\}$ different from $p_{\rm TURN}$ one by one execute DLF.\lock\ without contention, enter their critical sections and execute the \unlock\ without changing TURN (indeed, they all read 1 from FLAG[TURN]).
\item $p_{\rm TURN}$ completes the \lock, executes its critical section and the \unlock, where it increments TURN. The bypasses undergone by $p_0$ so far are $n$.
\item Now, repeat steps (2)--(8) again, for TURN = 2, 3, $\ldots,n-2$.
\item Finally, repeat steps (2)--(7) for TURN = $n{-}1$; then, let $p_{n-1}$ complete the \lock, execute its critical section and the first line of the \unlock.
\item All processes in $\{p_1,\ldots,p_{n-2}\}$ now invoke the \lock\ and assemble at line 8 (since they all read 0 from FLAG[TURN]).
\item $p_{n-1}$ then completes the \unlock, by setting TURN to 0.
\item All processes in $\{p_1,\ldots,p_{n-2}\}$ one by one execute DLF.\lock\ without contention, enter their critical section and  execute the \unlock\ without changing TURN.
\item To conclude, $p_0$ enters its critical section.
\end{enumerate}
Steps (2) to (10) bring $n(n{-}1)$ bypasses, whereas steps (11) to (13) bring another $n{-}2$ bypasses. So, the total number of bypasses is $n(n{-}1) + (n{-}2) = n^2 {-}2$. 

Notice that the above argument does not go through when $n=2$, as in step (2) no $p$ with the required properties can be found. In this case, only 1 bypass is possible, with only 1 interrupting assignment (namely, the W1 operation of the bypassed process). This has been confirmed by means of model checking---see \Sec{model checking}.

\begin{remark}\label{rem:regular}
Since regular registers allow only a subset of the possible behaviors of safe registers, trivially Theorem~\ref{thm:bib} also applies to regular registers. Moreover, the execution witnessing that the bound is tight applies easily to regular registers as well, since the only overlaps are on Boolean registers (flags) and they are only written to change the value; safe and regular registers coincide in this case.
\end{remark}

\section{Model Checking}\label{sec:model checking}
\newcommand{\PID}{\ensuremath{\mathbb{P}}}
\newcommand{\pid}{\ensuremath{p}}
\newcommand{\swby}[1]{\ensuremath{\mathit{sw_{#1}}}}
\newcommand{\fwby}[1]{\ensuremath{\mathit{fw_{#1}}}}
\newcommand{\startlock}[1]{\ensuremath{\mathit{sl}_{#1}}}
\newcommand{\finishlock}[1]{\ensuremath{\mathit{fl}_{#1}}}
\newcommand{\acts}{\textit{acts}}
\newcommand{\actsexcept}[1]{\ensuremath{\acts_{\!/\!#1}}}
\newcommand{\false}{\ensuremath{\mathit{false}}}
\newcommand{\flexcept}[1]{\ensuremath{\finishlock{\!/\!#1}}}

In \cite{GLS25} the mutual exclusion and starvation-freedom properties of Algorithms \ref{Dekker}, \ref{Anderson} and \ref{alg} have been automatically verified using the model checker of the mCRL2 toolset~\cite{mCRL2toolset}. 
Notably, Dekker's and Anderson's algorithms are found to satisfy mutual exclusion and starvation-freedom with atomic registers.
In the case of Algorithm~\ref{alg}, Lamport's One-Bit protocol is used as DLF. 
The resulting protocol, verified up to 3 processes, is found to behave as well as Lamport's Three-Bit protocol \cite{LamportMutex2}, as it satisfies starvation-freedom for all the three models considered in this paper (safe/regular/atomic registers).

The various forms of bounded bypass we presented in this paper have not been formalized in \cite{GLS25}, but we do this here. In order to reapply the work of  \cite{GLS25}, all we need to do is to capture these properties as formulae in the version of the modal $\mu$-calculus \cite{Pratt81,Koz83} employed by mCRL2 (that allows recursion variables to be parametrised by higher-order data -- such as numbers, sets and functions), and make minor changes to the models to support these formulae. 
Of note is that mCRL2 models have a single initial state. When verifying algorithms with multiple initial states, these being Dekker's and Bar-David's algorithms due to the different possible initial values of TURN, we check every initial state separately.

In Section~\ref{sec:bypass}, it was stated that bounded bypass can be viewed as a safety property that is required in conjunction with deadlock- and starvation-freedom, or as a liveness property that implies deadlock- and starvation freedom. Throughout this paper, the latter interpretation is used. However, for formalizing the properties it is much simpler to only consider the safety aspect. Deadlock- and starvation-freedom are formalized and verified separately in \cite{GLS25}; thus it is sufficient to only check the safety aspect here.

We now provide a high-level description of the formulae for the different notions of bounded bypass;
their concrete syntax in mCRL2, as well as the mCRL2 models of the algorithms with the different register types, are available online at \href{https://doi.org/10.5281/zenodo.17650539}{Zenodo}~\cite{GGSzenodo}.
We also report on the results of checking these formulae with mCRL2.
To do this, we follow the approach in \cite{GLS25} and run the model checker only for 2 or 3 processes; going beyond this limit would make the checking practically unfeasible due to state-space explosion.

Below, we write $\PID = \{p_0,\dots,p_{n-1}\}$ for the set of $n$ processes participating in a mutual exclusion protocol. We model such a protocol as a labeled transition system that features the actions $\startlock{\pid}$ and $\finishlock{\pid}$, stating that process $\pid \in \PID$ starts and finishes its \lock.
Notice that finishing  \lock\ in this model coincides with entering into the critical section; hence, actions $\finishlock{\pid}$ will be used to count the bypasses.
We write $\finishlock{\it all}$ for the set $\{\finishlock{q} \mid q \in \PID\}$, the finish-lock actions of all processes, and $\flexcept{p}$ for $\{\finishlock{q} \mid q \in \PID \wedge q \neq p\}$, the finish-lock actions of all processes except $p$.
We use $\swby{\pid}$ and $\fwby{\pid}$ for signaling when a process $\pid \in \PID$ starts and finishes a write operation, respectively;
we define $\swby{\it all}$ and $\fwby{\it all}$ analogously to $\finishlock{\it all}$.
Finally, we write $\acts$ for the set of all actions in the model, with $\acts_{\pid}$ being all actions by process $p$ and $\actsexcept{\pid}$ being all actions by processes other than $p$.

\subsection{Bounded Bypass and Post-Doorway Bounded Bypass}\mbox{}\label{sec:modelcheckingbb}

\bigskip
\noindent
A modal $\mu$-calculus formula partially expressing post-doorway bounded bypass is obtained by adapting the bounded bypass formula presented in \cite{hafidi2021fair}. To this aim,
 we add to our model a special action $\textit{doorway}_p$, stating that process $p$ leaves the doorway section of its lock. In fact, this action is implemented as an extra tag on the final action of the doorway.
 
The formula $\textit{pdbb}_p(b)$ defined below intuitively expresses that each time process $p\in\PID$ invokes the \lock, it finishes the \lock\ after losing at most $b$ competitions from the moment in which it performs the action $\textit{doorway}_p$. To find out whether a given protocol satisfies post-doorway bounded bypass, for a given $n$, a given bound $f(n)$, and a given placement of the doorway, we thus have to check the formula $\bigwedge_{\pid\in\PID}\textit{pdbb}_p(f(n))$. This formula expresses post-doorway bounded bypass only partially, because it does not quantify over all valid placements of the doorway in the protocol, and does not check whether executing the doorway requires only a bounded number of actions. So, a thorough check involves checking this formula for versions of the $n$-process protocol for each reasonable placement of the doorway.

Formally, we let
  \[\begin{array}{rcl}\textit{pdbb}_p(b) & := &
     [\acts^*][\startlock{p}][\overline{\{\textit{doorway}_p\}}^*][\textit{doorway}_p]
     \textit{count\_bypasses}_p(b) \\[2ex]
     \textit{count\_bypasses}_p(b) & := &
     \nu X(b).\big([\overline{\finishlock{\it all}}]X(b) \wedge [\flexcept{p}](b>0 \wedge X(b{-}1))\big)
  \end{array}\]
The formula $[\acts^*]\psi$ says that condition $\psi$ must hold after any sequence of actions, and
thus in any reachable state of the protocol. Likewise $[\startlock{p}]\phi$ says that condition
$\phi$ must hold in any state reachable by a transition labeled $\startlock{p}$. So together the
formula $\textit{pdbb}_p(b) = [\acts^*][\startlock{p}]\phi$ says that some condition $\phi$ must
hold after any execution ending with the start of $p$'s \lock. This condition $\phi$, in turn, says
that the formula \plat{$\textit{count\_bypasses}_p(b)$} must hold after any sequence of actions ending with
the first occurrence of the action $\textit{doorway}_p$. Here $\overline{S}$ denotes the complement (w.r.t.\ $\acts$) of a set $S\subseteq\acts$. Finally, $\textit{count\_bypasses}_p(b)$ tells
us that process $p$ loses at most $b$ competitions. It is expressed by
means of a largest fixed point construct: $\textit{count\_bypasses}_p(b)$ holds in all states such that
\begin{itemize}
\item after any action other than $\finishlock{q}$ for some $q\in\PID$, the same condition continues
  to hold, and
\item after any bypass by another process (i.e., after some $\finishlock q$ for some $q \neq p$), the same condition holds, but now with a budget $b{-}1$
  of one bypass less. Moreover, the formula evaluates to {\false} in case a bypass occurs when $b=0$.
\end{itemize}
For understanding the formula, it may help to add the conjunct $[\finishlock{p}]\textit{true}$, saying that when process $p$
finishes its \lock, there are no further requirements to check; however, this conjunct is vacuously true, so need not be included explicitly.

The formula for bounded bypass is exactly the same, but now fixing $\textit{doorway}_p$ to coincide
with the end of the first instruction of $p$'s \lock.

Using this formula, we can confirm the results on bounded bypass and post-doorway bounded bypass reported in this paper for Dekker's and Anderson's algorithms. 
For Dekker's, we place the $\textit{doorway}$ tag at the end of the first line (as we discussed in Remark~\ref{rem:DekkerPDBB}, this is the only possible placement of the doorway). It is not (currently) possible with mCRL2 to check that $\textit{pdbb}_p(b)$ is violated for all natural numbers $b$. However, for any bound we fill in (up to 100), bounded bypass is violated; moreover, we can replay in the tool the execution that witnesses that an arbitrary number of bypasses is possible (shown in Remark~\ref{rem:DekkerBB}).
For Anderson's algorithm, we place the $\textit{doorway}$ tag at the end of the fourth line and find that a bound of 1 is satisfied, but a bound of 0 is violated. We also find that the doorway can already end after the second line, but in that case a bound of 2 is needed.

We also confirm the result of Theorem~\ref{thm:bbp} for 2 and 3 processes: for both $n = 2$ and $n=3$, we find a tight bound of $n(n{-}1){-}1$ bypasses for a process that completes $\mathrm{W1}$.
Actually, in a preliminary version of this paper, we proved a $n(n{-}1)$ bound for Algorithm~\ref{alg} in the setting of atomic registers; it was the model checker that pointed out to us that such a bound was not tight and led us to reconsider our proofs, to tightly set the bound of Theorem~\ref{thm:bbp} at $n(n{-}1){-}1$.

The result of Theorem~\ref{thm:bbp} applies when atomic registers are used. As is stated in Proposition~\ref{pr:No Bounded Bypass} and Remark~\ref{rem:regbb}, this bound does not apply when safe or regular registers are used instead. Using the formula, we confirm that in those cases, no bound up to 100 is satisfied when there are 3 processes. For the specific case of 2 processes, we find that with both types of register a bound of 1 is sufficient. This is in accordance with the claim above Remark~\ref{rem:regular}.

\subsection{Intermittent Bounded Bypass}\mbox{}

\bigskip
\noindent
To capture intermittent bounded bypass, we need to make \df{intermittent} completely unambiguous, by deciding which assignments count as interrupting.
Naturally, an interrupting assignment may start before the process $p_i$ invokes its \lock, or end after $p_i$ reaches the corresponding critical section, as long as it overlaps with $p_i$'s \lock. The difficulty lays in the quantification over all acceptable definitions of the specific interrupting assignments. Following the application in the case study of \Sec{alg}, whether a write counts as an interrupting assignment may depend on the particular run in which it occurs. Here we opt for allowing \emph{any} choice of which writes in a run are classified as interrupting or non-interrupting.

We use the variables $b \mathbin\in\mathbbm{N}$ and $h\mathbin\in\mathbbm{N}$ to store the remaining budget of bypasses and interrupting assignments, respectively. Furthermore $W \subseteq \PID$ collects those processes that are currently engaged in a write operation. The formula $\textit{ibb}_\pid(h,b,W)$, defined below, holds in a given state of the mutual exclusion protocol with active writes $W\!$, when the process $\pid \in \PID$ will, in each possible execution continuing from that state, after any given occurrence of the action $\startlock{\pid}$, execute the action $\finishlock{\pid}$ after experiencing no more than $b$ bypasses, not counting those that occurred during a collection of no more than $h$ writes.
Thus, a mutual exclusion protocol satisfies intermittent bounded bypass of $f(n)$ within $h(n)$
interrupting assignments, for $n=|\PID|$, when the  initial state satisfies the formula
$\bigwedge_{\pid\in\PID}\textit{ibb}_\pid(h(n),f(n),\emptyset)$.

The formula $\textit{ibb}_\pid(h,b,W)$ is defined in four steps. In the first step, $\textit{ibb}_\pid(h,b,W)$ is defined in terms of an auxiliary formula $\textit{pre\_count}_\pid(h,b,W)$, which describes the intended behavior after the crucial action $\startlock{\pid}$ has occurred:
\begin{equation*}
\begin{array}{rl}
\textit{ibb}_\pid(h,b,W) :=
  \nu X(h,b,W)~. ~(
    &  \bigwedge_{q\in \PID} [\swby{q}] X(h,b,W\cup\{q\}) \\
   & \wedge\quad \bigwedge_{q\in \PID} [\fwby{q}] X(h,b,W{\setminus}\{q\}) \\
   & \wedge\quad [\overline{\swby{\it all} \cup \fwby{\it all}}]X(h,b,W)\\
   & \wedge\quad [\startlock{\pid}] \textit{pre\_count}_\pid(h,b,W))  \\[1ex]
\end{array}
\end{equation*}
In particular, $\textit{ibb}_\pid(h,b,W)$ holds for the largest set of states such that
\begin{itemize}
\item after the action $\startlock{\pid}$ occurs, the formula $\textit{pre\_count}_\pid(h,b,W)$ should hold, and
\item after any action $a \in \textit{acts}$ occurs, the same formula $\textit{ibb}_\pid(h,b,W')$ should still hold, where $W'$ is an updated version of $W$, by adding or removing a process from $W$ in case $a$ is the  start or finish of a write operation.
\end{itemize}
The second bullet point applies also when $a = \startlock{\pid}$, because $\textit{pre\_count}_\pid(h,b,W)$ should hold after \emph{any} occurrence of $\startlock{\pid}$, not merely after the first one.

Above we defined the \lock\ to start when its first instruction starts, but since in our model the start of the \lock\ is captured by the special actions $\startlock{\pid}$, our model creates space between the start of the lock and its first instruction. Hence, in our formula we explicitly have to skip over this gap before we start counting bypasses. This is the function of the second step in our definition of $\textit{ibb}_\pid(h,b,W)$, which defines $\textit{pre\_count}_\pid(h,b,W)$ in terms of  $\textit{skip\_first}(h,b,W)$, which is like $\textit{pre\_count}_\pid(h,b,W)$, but starting at the first action of  \lock:\vspace{-1ex}
\begin{equation*}
\begin{array}{rl}
\textit{pre\_count}_\pid(h,b,W) := \nu X(h,b,W)~.~(h{>}0\ \ \wedge
                &  \bigwedge_{q\in \PID{\setminus}\{\pid\}} [\swby{q}] X(h,b,W\cup\{q\}) \\
  \wedge & \bigwedge_{q\in \PID{\setminus}\{\pid\}} [\fwby{q}] X(h,b,W{\setminus}\{q\}) \\
  \wedge & [\actsexcept{\pid}{\setminus}(\swby{\it all} \cup \fwby{\it all})]X(h,b,W)\\
  \wedge & [\acts_{\pid}{\setminus} \swby{\pid}] \false \\
  \wedge & [\swby{\pid}] \textit{skip\_first}(h{-}1,b,W\cup \{p\}))
\end{array}
\end{equation*}
From our analysis we know that no protocol can satisfy intermittent bounded bypass, unless the first action of the \lock is a write, and this write counts as interrupting.
For this reason, our formula $\textit{pre\_count}_\pid(h,b,W)$ evaluates to {\false} in case $h=0$, and when the first action of $p$ encountered by \plat{$\textit{pre\_count}_\pid(h,b,W)$} is anything other than a write. Moreover, when encountering this first write we decrement $h$ by~$1$.

The third step in our definition defines $\textit{skip\_first}_\pid$ in terms of $\textit{count\_bypasses}_\pid$
in a similar vein, by not counting any bypasses until we reach the end of the first (write) action of $p$.
\begin{equation*}
\begin{array}{rl}
\textit{skip\_first}(h,b,W) := \nu X(h,b,W)~.~( &
  \bigwedge_{q\in \PID{\setminus}\{\pid\}} [\swby{q}] X(h,b,W\cup\{q\}) \\
  & \wedge\quad  \bigwedge_{q\in \PID{\setminus}\{\pid\}} [\fwby{q}] X(h,b,W{\setminus}\{q\}) \\
  & \wedge\quad [\overline{\swby{\it all} \cup \fwby{\it all}}]X(h,b,W)\\
  & \wedge\quad [\fwby{\pid}] \textit{count\_bypasses}_\pid(\mathit{init}(h,b,W{\setminus}\{p\})))\hspace{-5pt}
\end{array}
\end{equation*}
where $\mathit{init}(h,b,W)$ will be described in a moment.

The formula $\textit{count\_bypasses}_\pid$ keeps an open mind on whether any ongoing write shall be counted as interrupting or not. To this end it maintains as a parameter not just a single triple $(h,b,W)$ as used above, but a set $O$ of triples $(h,b,I)$, with $I$ denoting those writes that are currently active and are deemed interrupting. For each possible classification of encountered writes as interrupting or non-interrupting, we maintain such an option $(h,b,I) \in O$.
The formula $\textit{count\_bypasses}_\pid(O)$ expresses that process $\pid \in \PID$ will, in each possible execution continuing from the current state, for some  $(h,b,I)\in O$,  execute the action $\finishlock{\pid}$ after experiencing no more than $b$ bypasses, not counting those that occurred during a collection of at most $h$ writes chosen to be classified as interrupting, including the set $I$ of ongoing writes:
\begin{equation*}
\begin{array}{rl}
\textit{count\_bypasses}_\pid(O) := \nu X(O)~. ~ (O \mathbin{\neq} \emptyset & 
  \wedge \quad  \bigwedge_{q\in \PID} [\swby{q}] X(\mathit{wstart}(O, q)) \\
   & \wedge \quad \bigwedge_{q\in \PID} [\fwby{q}] X(\mathit{wfinish}(O, q))\\
 & \wedge \quad [\flexcept{\pid}] X(\mathit{bypass}(O))  \\
 & \wedge \quad [\overline{\swby{\it all} \cup \fwby{\it all} \cup \finishlock{\it all}}]X(O))
\end{array}
\end{equation*}
\noindent
Here we use the following functions:
\[\begin{array}{l@{~:=~}l}
\mathit{init}(h,b,W) & opt(\{(h{-}|I|,b,I) \mid I \subseteq W \wedge h {-} |I| \geq 0\})\\
\mathit{wstart}(O,q) & opt(O \cup \{(h{-}1,b,I\cup\{q\}) \mid (h,b,I) \in O \wedge h{>}0\})\\
\mathit{wfinish}(O,q) & opt(\{(h,b,I{\setminus}\{q\}) \mid (h,b,I) \in O\})\\
\mathit{bypass}(O) & opt(\{(h,b,I) \mid (h,b,I) \in O \wedge I\neq \emptyset\}
                  \cup \{(h,b{-}1,\emptyset) \mid (h,b,\emptyset) \in O \wedge b{>}0\})
\end{array}\]
\noindent where
\begin{align*}
    \mathit{opt}(O) :=& \{(h,b,I) \in O \mid \neg\exists (h', b', I')\in O.(h \leq h' \land b \leq b' \land I \subseteq I' \land (h', b', I')\neq (h,b,I))\}
\end{align*}

When $\textit{count\_bypasses}_\pid$ is called, one option $(h,b,I) \in O$ is created for each choice of a set $I\subseteq W$ of the ongoing writes as being interrupting; this is captured by the function \textit{init}. Each time we encounter a start-write, say by process $q\in \PID$, we double the amount of options by allowing this write to be interrupting or not, unless we have exhausted our budget of interrupting assignments. When we encounter a bypass (an action in $\flexcept{\pid}$), we decrement the bound $b$ only for those options $(h,b,I)\in O$ in which no interrupting assignment is active, that is, $I=\emptyset$.  Such an option is deleted in case the budget $b$ is already exhausted. 
Finally, in case any action occurs that is not a start or finish write, or a finish-lock, we simply keep checking the same formula.
The first clause $O \mathbin{\neq} \emptyset$ of $\textit{count\_bypasses}_\pid(O)$ says that the formula evaluates to \textit{false} when we run out of valid options.

To improve performance when model checking, we call the function \textit{opt} whenever we create a new set $O$, which removes non-optimal options from the set; an option is non-optimal when another option in the set is strictly better in one dimension  (more budget for interrupting assignments, more budget for bypasses, or a superset of the set of interrupting assignments) and at least as good in the others.

We use this formula to check the claims made in \Sec{algSafeRegs} for Algorithm~\ref{alg} with Lamport's One-Bit protocol as DLF.

With safe registers, we find that the bound of $n^2 -2$ bypasses within $n$ interrupting assignments reported in Theorem~\ref{thm:bib} is indeed satisfied and tight for 3 processes.
With 2 processes both the bound on bypasses and the bound on interrupting assignments can be decremented by one, in accordance with our observations at the end of Section~\ref{sec:alg} and in Section~\ref{sec:modelcheckingbb} above.


With regular registers, the results are exactly the same as with safe ones: the bound of $n^2 -2$ bypasses within $n$ interrupting assignments is satisfied and tight for 3 processes, while the bound can be lowered by 1 in both directions for 2 processes.

As is noted in \Sec{ibb}, intermittent bounded bypass coincides with bounded bypass when there is a bound on interrupting assignments of exactly 1, namely the first instruction of the {\lock}.
As such, we can also run the intermittent bounded bypass formula on the model with atomic registers, using a bound of 1 for interrupting assignments and of $n(n-1)-1$ for bypasses.
We find that these bounds are satisfied and tight for both 2 and 3 processes.

To verify the correctness of the proof of Theorem~\ref{thm:bib}, we also developed an alternative
formula for intermittent bounded bypass that specifically classifies as interrupting those writes that are treated as such in that proof, namely the ones indicated in the text below Remark~\ref{rem:regbb}---this is different from the approach presented above, where every possible choice of at most $h(n)$ interrupting assignments is allowed. That formula is included in \cite{GGSzenodo} for the interested reader, but not here. When checked on the safe register model, it confirms the bounds reported in Theorem~\ref{thm:bib}, and as such confirms that these writes are indeed the $h(n)$ interrupting assignments that are needed for this proof.

The process of developing this formula helped us refine the definition of which writes should be counted as interrupting in Algorithm~\ref{alg}.
In an earlier proposal, instead of requiring conditions (a), (b) and (c) to be satisfied at some moment \emph{during} the execution of the write, it was required that they be satisfied at the \emph{end} of the write. 
In the corresponding earlier version of our alternative formula,
this resulted in too few writes being classified as interrupting, and hence the bound on bypasses being violated. The violating traces produced by the model checker for this version of the property helped identify and fix the issue.

\section{Discussion and Conclusions}
\label{sec:disc}

In this paper, we provided a few notions of bounded bypass as liveness properties that, as Raynal advocates, strengthen the notion of starvation-freedom. In doing this, we corrected two mistakes in the definition provided in \cite{Raynal} by explicitly requiring that: (1) the process that invokes \lock\ eventually enters the critical section, and (2) this must happen after a bounded number of bypasses
{\em after the moment in which the process has completed the execution of the first instruction of its \lock}. Without the first requirement, bounded bypass does not imply starvation-freedom; without the second amendment, we argued that no {\MUTEX} protocol satisfies bounded bypass.

Next, we presented two weakenings of bounded bypass, namely post-doorway and intermittent bounded bypass, that are half-way properties between starvation-freedom and bounded bypass. 
Subsequently, we proved that every deadlock-free protocol can be turned into one that satisfies (some form of) bounded bypass, both in the setting of safe, regular and atomic registers.

In addition to developing new definitions, we also prove that they are (or are not) satisfied for three relevant algorithms.
Building on this, we take a lesson from Lamport, who stated in \cite{LamportMutex2}: {\it `Our proofs have been done in the style of standard ``journal mathematics'', using informal reasoning that in principle can be reduced to very formal logic, but in practice never is. [...] The behavioral reasoning used in our correctness proofs, and in most other published correctness proofs of concurrent algorithms, is inherently unreliable; we advise the reader to be skeptical of such proofs.'} 
Thus, to support the abstract reasoning provided in the proofs, we formalized the definitions in a formal logic and used model checking to confirm our claimed results.
We thereby aim to reduce reader's skepticism and improve developer's confidence.

Of particular note is our verification of Algorithm~\ref{alg}.
The starvation-freedom of this algorithm was automatically verified in \cite{GLS25}, by using Lamport's One-Bit protocol \cite{LamportMutex2} as DLF and by running the resulting code for 3 processes.
There, it was shown that the resulting protocol behaves as well as Lamport's Three-Bit protocol \cite{LamportMutex2}, as it satisfies starvation-freedom for all the three models considered in this paper (safe/regular/atomic registers). 
Here, we went one step further: we provided formulae for the different types of bounded bypass defined in this paper and used them to confirm all claims made here. This also had a positive impact on the theoretical development of the work, since it identified for the protocol of Algorithm \ref{alg} a non-tight bound (in the setting of atomic registers) and an imprecise identification of the interrupting assignments (in the setting of safe registers); this allowed us to fine-tune the theoretical part accordingly.

\bibliographystyle{alphaurl}
\bibliography{bibliography.bib}

\end{document}